\journal{opticajournal} 
\DeclareMathOperator*{\argmin}{arg\,min}
\DeclareMathOperator{\Rop}{\mathcal{R}}
\DeclareMathOperator{\Pop}{\mathcal{D}}
\DeclareMathOperator{\Sop}{\mathcal{S}}
\DeclareMathOperator{\Mop}{\mathcal{M}}
\newtheorem{theorem}{Theorem}[section]
\newtheorem{proposition}[theorem]{Proposition}
\newenvironment{proof}{\noindent\textit{Proof:}}{\hfill$\square$}
\def\mod#1{{{#1}}} 
\begin{document}

\title{X-ray nano-holotomography reconstruction with simultaneous probe retrieval}
\author{Viktor Nikitin,\authormark{1,*} Marcus Carlsson,\authormark{2} Do\u{g}a G\"ursoy,\authormark{1} Rajmund Mokso,\authormark{3} and Peter Cloetens\authormark{4}}

\address{\authormark{1} Advanced Photon Source, Argonne National Laboratory, 9700 S Cass Ave, Lemont, IL 60439, USA\\
\authormark{2} Centre for Mathematical Sciences, Lund University, Sölvegatan 18, 223 62 Lund, Sweden\\
\authormark{3} Department of Physics, Danish Technical University, Fysikvej, 310, 2800 Kgs. Lyngby, Denmark\\
\authormark{4} ESRF-The European Synchrotron
71, Avenue des Martyrs, 38043 Grenoble, France\\
}

\email{\authormark{*}vnikitin@anl.gov} 

\begin{abstract*} 
In conventional tomographic reconstruction, the pre-processing step includes flat-field correction, where each sample projection on the detector is divided by a reference image taken without the sample. When using coherent X-rays as probe, this approach overlooks the phase component of the illumination field (probe), leading to artifacts in phase-retrieved projection images, which are then propagated to the reconstructed 3D sample representation. The problem intensifies in nano-holotomography with focusing optics, that due to various imperfections create high-frequency components in the probe function. Here, we present a new iterative reconstruction scheme for holotomography, simultaneously retrieving the complex-valued probe function. Implemented on GPUs, this algorithm results in 3D reconstruction resolving twice thinner layers in a 3D ALD standard sample measured using nano-holotomography.
\end{abstract*}

\section{Introduction}
Holotomography is an advanced imaging technique that extends conventional X-ray tomography by exploiting the phase shift that X-rays undergo when passing through a sample~\cite{cloetens1999holotomography}. This technique allows for visualizing phase components of the internal structure of materials with high resolution and contrast. It is particularly valuable for studying samples that have low absorption contrast, such as biological tissues~\cite{langer2012x,kuan2020dense,andersson2020axon,robisch2020nanoscale} or battery materials \cite{nguyen20213d,li2022dynamics}. 

The fundamental principle of holotomography lies in measuring phase shifts experienced by X-rays as they traverse different materials, converted into phase-contrast images using various algorithms. As synchrotron sources achieve higher coherence, the technique faces challenges in reaching nano-scale resolution~\cite{martinez2016id16b,da2017high}. Nano-holotomography employs the Projection X-ray Microscope (PXM) instrument, utilizing focused X-ray beams to produce magnified projections of a sample. A schematic of the PXM is shown in Figure~\ref{fig:setup}. 
\mod{To recover accurate 3D phase-contrast sample structures, tomography data acquisition with the sample rotation is typically performed at several distances from the focal spot, marked in the figure as $z_j,\,j=0,\dots,N_z-1$. The distance between the focal spot and detector is denoted as $Z$, which gives geometric magnifications $m_j=Z/z_j,$ on the detector due to the cone-beam geometry.
The feasibility of this multi-distance approach for quantitative retrieval of the object phase was first demonstrated by Cloetens et al.~\cite{cloetens1999holotomography}. In practice, four distances are typically sufficient for accurate phase reconstruction. Within the cone-beam geometry employed in the nano-holotomography regime, the initial distance is selected to achieve the required resolution. Subsequent distances are calculated as fixed ratios of the first distance, forming the basis for multi-distance holotomography measurements at beamlines.}

With the theoretical foundations firmly established, PXM instruments implementing multi-distance holotomography are becoming increasingly prevalent at synchrotrons worldwide. Notable examples include beamlines at the ESRF (ID16A~\cite{da2017high} and ID16B~\cite{martinez2016id16b}), Max IV (NanoMAX~\cite{kalbfleisch2022x}), and DESY (P10~\cite{hagemann2017probe}). Additionally, a new PXM is currently under construction at the APS (32-ID~\cite{bean2021new}).  

Kirkpatrick-Baez (KB) mirrors, normally used for high-resolution imaging, use reflective surfaces to focus X-rays in two perpendicular directions. Multilayer coatings significantly enhance the performance of KB mirrors, offering increased reflectivity for higher intensity X-ray beams, enhanced focusing precision for higher resolution, and optimization across a broader range of wavelengths. 
While the multilayer coating on KBs offers many benefits, it also creates high-frequency components into the illumination field (probe), thereby complicating the phase retrieval process. The right panel of Figure~\ref{fig:setup} shows examples of a reference image and acquired projection data for different distances from the sample, with each projection containing vertical and horizontal lines caused by the multilayer coating.

\begin{figure}[t]
\centering\includegraphics[width=1\textwidth]{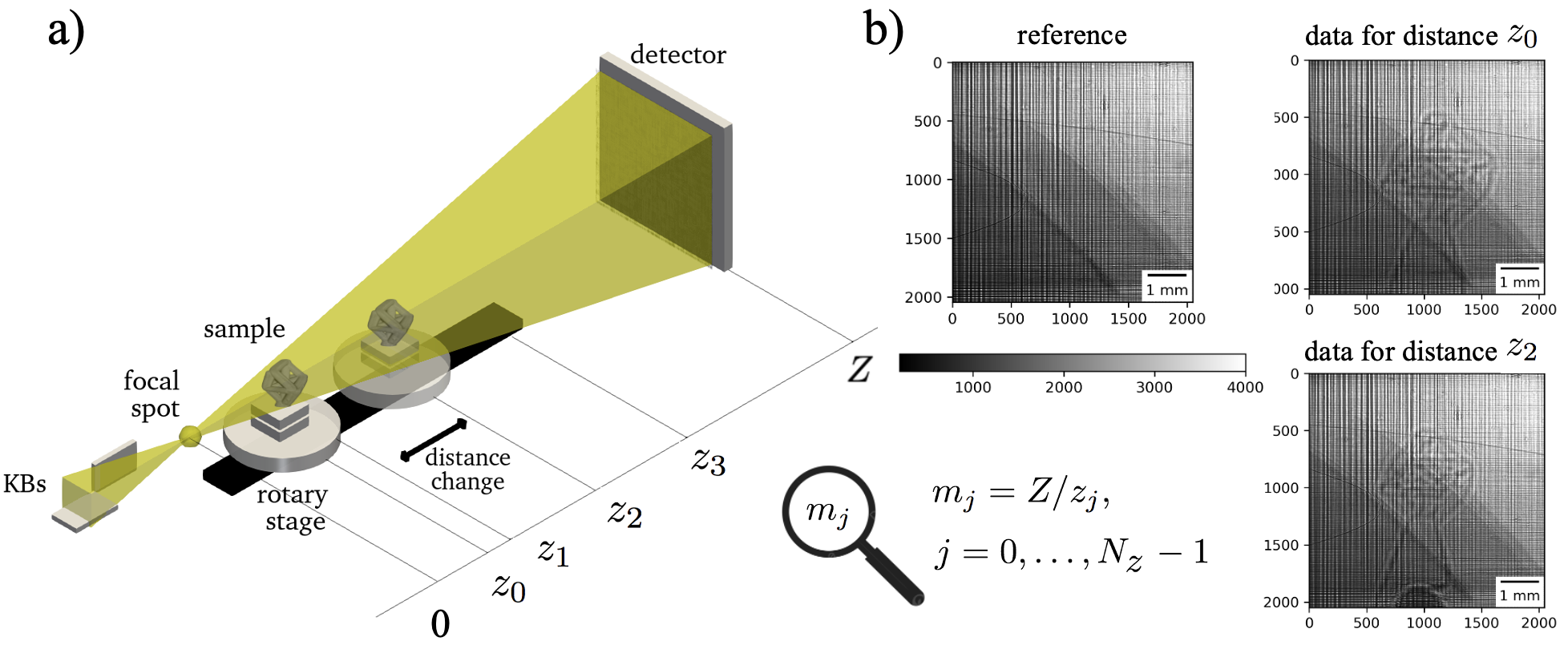}
\caption{Holotomography data acquisition with the PXM: a) a scheme of data collection with 4 distances from the focal spot $\left(z_j, j=0,\dots,3\right)$, with corresponding geometric magnifications $m_j$ of images at the detector position $Z$,  b) an example of collected data at ID16A beamline of ESRF: reference image on the detector when the sample is out of the field of view (left) and data on the detector for the first and fourth distances between the focal spot and sample planes (right).}
\label{fig:setup}
\end{figure}

The conventional phase-retrieval procedure for a single projection involves division by the reference image, followed by reconstruction using phase-retrieval methods such as the transport-of-intensity (TIE) equation~\cite{gureyev1996phase,barty1998quantitative} or contrast transfer function (CTF)~\cite{cloetens1999holotomography,zabler2005optimization}, supplemented by iterative data fitting employing the Fresnel propagation operator~\cite{paganin2006coherent}. These approaches, along with their extensions, are comprehensively reviewed and compared in the literature, see e.g.~\cite{langer2008quantitative}. 
However, the initial division by the reference image fails to account for the phase component of the probe, aggregated by the oscillatory structure caused by the multilayer coating of the KB mirrors. These factors contribute to imaging artifacts and \mod{are} the bottleneck in achieving high-resolution nano-scale imaging.

A potential hardware solution to homogenize  the illumination is to use X-ray waveguide optics~\cite{salditt2008high, salditt2015compound}. Waveguides are structures designed to guide X-rays with minimal dispersion, maintaining coherence and enhancing the imaging process. By properly adjusting the waveguide-to-sample distance, objects can be imaged at a single distance in a full-field configuration, eliminating the need for scanning. Robust and rapidly converging iterative reconstruction schemes can then be applied to invert the holographic data. \mod{However, utilizing X-ray waveguides presents challenges such as complex fabrication ~\cite{kruger2012sub} and a significant reduction in beam intensity at the waveguide exit (see Figure 3 in~\cite{salditt2015compound}).}

Hagemann et al. \cite{hagemann2014reconstruction} were likely the first to recognize problems related to flat-field division, proposing methods similar to coherent diffractive imaging and ptychographic reconstruction algorithms as solutions. Their work pioneered treating the probe retrieval problem as part of phase retrieval in holography, using iterative schemes like the extended Ptychographic Iterative Engine (ePIE) \cite{maiden2009improved}, relaxed averaged alternating reflections (RAAR) \cite{luke2004relaxed}, and solvers like PtyPy \cite{enders2016computational}. Their work focused on 2D holography, where limited or under-sampled data created challenges for reconstructing two complex-valued images, affecting the stability of the reconstruction process. To address this, they proposed adding constraints to stabilize the solution \cite{hagemann2017probe}. In a later work \cite{dora2024artifact}, they deviated from support constraints and derived an optimization based approach similar to the one proposed in this paper, albeit for parallel beam geometry and without taking into consideration probe retrieval.

In this work, we adapt the mathematical framework to treat cone-beam geometry as well as probe retrieval and extend the problem to 3D, leveraging more data from tomography. We introduce a new mathematical formulation and an iterative algorithm optimized for modern Graphical Processing Units (GPUs). This approach allows for efficient recovery of 3D objects and widespread application to large-scale synchrotron datasets, eliminating the need for additional support constraints to maintain numerical stability. These advancements are timely, given the spread of coherent imaging instrumentation worldwide.

The remainder of the paper is structured as follows. Section 2 presents the formulation of the holotomography reconstruction problem with probe retrieval, elucidating the solution through an iterative scheme. In Section 3, we present validation results using synthetic data, including convergence analysis and assessment of GPU acceleration. Section 4 showcases results obtained from experimental data acquired with the PXM instrument at the ID16A beamline of ESRF, and includes a comparative analysis with results obtained using conventional approaches. Section 5 offers conclusions and outlines future prospects.

\section{Holotomography problem with probe retrieval}

We will begin with the conventional formulation of the 2D phase retrieval problem with parallel beam. In this formulation, the acquired data is divided by the reference image. For the object transmittance function $\psi$ the formulation reads as
\begin{equation}\label{Eq:fwd_model}
  \left|\Pop_{Z-z_j} (\psi)\right|^2 = d_{j}/d^r, 
\end{equation}
where $d_{j}$ denotes data for sample-detector distances $Z-z_j$, $(j=0\dots N_z-1),$ and $\Pop_{Z-z_j}$ is the corresponding Fresnel transform defined via convolution with the Fresnel kernel (see equation \eqref{defPz} in the Appendix), 
while $d^r$ is the reference data, i.e. the squared absolute value of the probe function on the detector when the sample is out of field of view. The division of two images (i.e.~matrices) is to be interpreted pixelwise, and analogously the above modulus is also applied elementwise to every pixel. However, as pointed out in \cite{hagemann2014reconstruction,hagemann2017x}, the above formula is only a crude approximation of the physical process and introduces severe distortion effects.

More precisely, assume that at position $z_0$ we have probe function $q$. Then the probe image on the detector can be described as 
\begin{equation}
    \left|\Pop_{Z-z_0}(q)\right|^2 = d^r.   
\end{equation}
With the object in position $z_0$ as well, we then measure $d_0=\left|\Pop_{Z-z_0} (q\cdot \psi)\right|^2$ which clearly does not equal $\left|\Pop_{Z-z_0} (q)\cdot \Pop_{Z-z_0}(\psi)\right|^2$, which would need to be true in order for \eqref{Eq:fwd_model} to hold, (see\cite{homann2015validity} for an in-depth mathematical study of these issues).

The data modeling process for general $j$ should include two wave propagations: (1) propagation of the probe function to the sample plane $z_j$ and (2), propagation of the wave formed after interaction of the probe and the sample to the detector. It follows that a new holography model should be written as
\begin{equation}\label{Eq:fwd_model2}
  \left|\Pop_{Z-z_j}\big(\Pop_{z_j-z_0}(q) \cdot \psi\big)\right|^2 = d_{j}. 
\end{equation}
It is easy to see that models \eqref{Eq:fwd_model} and \eqref{Eq:fwd_model2} are not equivalent since the first model does not take into account the phase component of the probe. In Section 3 we will demonstrate this difference using simulated data and show how it affects the final reconstruction.

The new formulation \eqref{Eq:fwd_model2} introduces an additional unknown to the holography problem: the probe function $q$. A viable approach to recover both the sample and the probe is to use the near-field ptychography (NFP) method~\cite{stockmar2013near}, which involves supplementary measurements of a high-contrast, detailed sample, such as a Siemens star. However, we argue that such an approach is suboptimal since the probe can be unstable and may vary significantly between the NFP measurements and the multi-distance holography measurements of the actual sample. \mod{Alternatively, the NFP method could be applied directly to actual samples. However, if the sample lacks sufficient features, the number of scanning $x,y$ positions must be increased to ensure adequate data redundancy and improve probe retrieval accuracy. Moreover, sharp features are typically uncommon in projections through real 3D samples measured during experiments. The resulting increase in scanning positions for NFP leads to additional radiation dose to the sample.} 

We will not investigate further how to best solve \eqref{Eq:fwd_model2}, since the main topic of this article is holotomography in which the above issues disappear, since
solving the holography problem for several projections, $\psi_k, k=0,\dots,N_\theta-1$ gives additional constraints to variable $q$. Assuming that we have a parallel beam and that the probe function is not changed during acquisition of the $N_\theta$ projections, the data acquisition process can be modeled as follows,

\begin{equation}\label{Eq:fwd_model3}
  \left|\Pop_{Z-z_j}\big(\Pop_{z_j-z_0}(q) \cdot \psi_k\big)\right|^2 = d_{k,j}, 
\end{equation}
where $q$ and $\psi_k, k=0,\dots, N_\theta-1$, are unknowns. The $L_2$-norm minimization,
\begin{equation}\label{Eq:min0}
\argmin_{q,\psi_k }\sum_{k,j}\left\|\left|\Pop_{Z-z_j}\big(\Pop_{z_j-z_0}(q) \cdot \psi_k\big)\right|^2 - d_{k,j}\right\|_2^2,
\end{equation}
finalizes our problem formulation for the parallel beam geometry. 

The problem formulation for the cone-beam geometry in nano-resolution imaging leverages the Fresnel scaling theorem~\cite{paganin2006coherent}, which transforms the problem into an equivalent one in parallel beam geometry. With the conventional reconstruction approach involving division by the reference image, the formulation reads as 
\begin{equation}\label{Eq:fwd_model_cone}
  \left| \Mop_{m_j}\big( \Pop_{\zeta_j}  (\psi )\big)\right|^2 =  {d_{j}}/{d^r},  
\end{equation}
where $\Mop_{m_j} (f)(x,y) = f\left({x}/{m_j},{y}/{m_j}\right)$ with $m_j = {Z}/{z_j}$, and adjusted distances are calculated as $\zeta_j={(Z-z_j)}/{m_j}$. In practice, the magnification factors are moved to the right side of the formula,
\begin{equation}\label{Eq:fwd_model_cone2}
  \left| \Pop_{\zeta_j} (\psi) \right|^2 =  \Mop_{1/m_j}({d_{j}}/{d^r})
\end{equation}
which, apart from the data scaling, becomes identical to the parallel beam formulation~\eqref{Eq:fwd_model}.

For the problem with the probe retrieval, the setup is more intricate and we can not simply move the magnification factor to the data. This issue stems from the fact that we are applying the Fresnel scaling theorem twice, once when we propagate the probe to the sample position and then again when propagating the product to the detector. However, we prove in Appendix A that this chain of operations can be evaluated in a way that the rescaling, necessary for the various distances, only affects the sample and not the probe. Numerically, this leads to much more stable reconstruction since the probe typically has large oscillations on the pixel scale and hence interpolating the probe is an unstable operation. We provide the details in Appendix A and here simply present our working formula for the forward model: 
\begin{equation}\label{Eq:fwd_model_cone3}
  \left|\frac{z_0}{Z}\Mop_{m_0}\big(\Pop_{\zeta_j/\tilde m_j^2}\big(\Pop_{\omega_j}(q)\cdot \Mop_{1/\tilde m_j}(\psi) \big)\big)\right|^2 = d_{j},
\end{equation}
where $\tilde m_j={z_j}/{z_0}$ and $\omega_j=(z_j-z_0)/\tilde m_j$ and, as previously, $\zeta_j={z_j(Z-z_j)}/{Z}$ and $m_0={Z}/{z_{0}}$. 

{Note that the different rescalings $\Mop_{1/\tilde m_j}$ in this setup is only affecting $\psi$, rather than the measured data corrected by the reference image in the traditional setup without probe retrieval~\eqref{Eq:fwd_model_cone2}. 
The formula's consistency can be verified by setting 
$q=1$ and, separately, $\psi=1$, both of which lead to the conventional form for cone-beam propagation.} One more constraint to the problem can be added using the reference image data, namely $\left|\frac{z_0}{ Z}\Mop_{m_0}\big(\Pop_{\zeta_0}(q)\big)\right|^2 = d^r$. Finally note that $$\left|\frac{z_0}{ Z}\Mop_{m_0}\big(\Pop_{\zeta_0}(q)\big)\right|^2 = \frac{1}{ m_0^2}\Mop_{m_0}\big(|\Pop_{\zeta_0}(q)|^2\big)$$ so that the above can be simplified to $|\Pop_{\zeta_0}(q)|^2 = \tilde d^r$ where $\tilde d^r=m_0^2\Mop_{1/m_0}(d^r)$.

Experimental data requires alignment for both different sample planes and tomography projections. Also, the probe is often shifting during the data acquisition.  Corresponding shifts for alignment are typically found before reconstruction as a pre-processing step. In the case of conventional reconstruction involving division by the reference image, all measured data can be shifted back according to recorded shifts prior to reconstruction. For the proposed reconstruction, the shift operator can be added to the problem formulation. Generalizing everything for the case with several projections yields the following $L_2$-norm functional for holography with probe retrieval, which needs to be minimized given that the $\psi_k$'s are related through additional constraints given the particular experimental framework:

\begin{align}    
\label{Eq:min2}&F_1(\psi_0,\dots,\psi_{N_\theta - 1},q) =\\ \nonumber&\sum_{k,j}\left\|\left|\Pop_{\zeta_j/\tilde m_j^2}\left(\Pop_{\omega_j}\big(\Sop_{s_{k,j}}(q)\big)\cdot\Mop_{1/\tilde m_j} \big(\Sop_{t_{k,j}}(\psi_k)\big)\right)\right|^2 - \tilde d_{k,j}\right\|_2^2+ \left\|\left|\Pop_{\zeta_0}(q)\right|^2 - \tilde d^r\right\|^2_2,
\end{align}
where $\tilde d_{j,k}$ stands for $m_0^2 \Mop_{1/m_0} d_{j,k}$ and the operator $\Sop_s$, with $s\in\mathbb{R}^2$, is a shift operator for moving functions in $x$ and $y$ coordinates, i.e. orthogonal to the beam. The shifts $t_{k,j}$ are used to move the sample whereas $s_{k,j}$ account for shifts in the probe. We remark that the shift operators commute with all other involved operators, so it is possible to remove the shifts from probe $q$ and instead have one shift that affects $\psi_k$ and another that affects the data $\tilde d_{k,j}$. The benefit of this approach in practice is that it removes the necessity to interpolate the probe (which is highly oscillatory). However, we do not pursue this further here. 
 An alternative minimization functional to \eqref{Eq:min2} is 
 \begin{align}    
 \label{Eq:min1}&F_2(\psi_0,\dots,\psi_{N_\theta - 1},q) =\\ \nonumber&\sum_{k,j}\left\|\left|\Pop_{\zeta_j/\tilde m_j^2}\left(\Pop_{\omega_j}\big(\Sop_{s_{k,j}}(q)\big)\cdot \Mop_{1/\tilde m_j} \big(\Sop_{t_{k,j}}(\psi_k)\big)\right)\right| - \sqrt{\tilde d_{k,j}}\right\|_2^2+\left\|\left|\Pop_{\zeta_0}(q)\right| - \sqrt{\tilde d^r}\right\|^2_2 
 \end{align}
which, based on our tests, is more favorable in terms of the convergence {speed of the Conjugate Gradients solver used in this work.} 
A somewhat lengthy calculation yields that
\begin{equation}\label{Eq:gradpsi}
        \nabla_{\psi_k} F_2(\psi_0,\dots,\psi_{N_\theta - 1},q) = 2\sum_j L_{k,j}^*\left(L_{k,j}(\psi_{k})-\sqrt{\tilde d_{k,j}}\cdot \frac{ L_{k,j}(\psi_{k})}{\left|L_{k,j}(\psi_{k})\right|}\right)
\end{equation}
where $(\cdot)^*$ denotes the Hermitian adjoint operator and $$L_{k,j}(\psi)=\Pop_{\zeta_j/\tilde m_j^2}\left(\Pop_{\omega_j}\big(\Sop_{s_{k,j}}(q)\big)\cdot\Mop_{1/\tilde m_j} \big(\Sop_{t_{k,j}}(\psi)\big)\right).$$ 
After another tedious calculations, it also follows that $$L_{k,j}^*(d)=\Sop_{-t_{k,j}}\left(\Mop_{1/\tilde m_j}^*\left(\overline{(\Pop_{\omega_j}\big(\Sop_{s_{k,j}}(q)\big)}\cdot\Pop_{-\zeta_j/\tilde m_j^2}(d)\right)\right).$$ 
Similarly, the steepest ascent direction for the probe function $q$ is given as
\begin{align*}
            &\nabla_q F_2(\psi_0,\dots,\psi_{N_\theta - 1},q) = \\&2\left(\sum_{k,j} Q_{k,j}^*\left(Q_{k,j}(q)-\sqrt{\tilde d_{k,j}}\cdot\frac{ Q_{k,j}(q)}{\left|Q_{k,j}(q)\right|}\right)
        +\Pop_{\zeta_0}^*\left(\Pop_{\zeta_0}(q)-\sqrt{\tilde d^r}\cdot\frac{ \Pop_{\zeta_0} (q)}{\left|\Pop_{\zeta_0} (q)\right|}\right)\right),
\end{align*}
where  $$Q_{k,j}(q)=\Pop_{\zeta_j/\tilde m_j^2}\left((\Pop_{\omega_j}\big(\Sop_{s_{k,j}}(q)\big)\cdot\Mop_{1/\tilde m_j} \big(\Sop_{t_{k,j}}(\psi_k)\big)\right) .$$ 

Armed with the steepest ascent directions, we can now construct iterative schemes for solving \eqref{Eq:min1} by using a variety of methods. Here we employ a conjugate-gradient (CG) method for its faster convergence rate~\cite{Dai:00}. CG iterations are given as 
\begin{equation}
    \begin{aligned}
        \psi_k^{(m+1)} = \psi_k^{(m)}+ \gamma_k^{(m)} \eta_k^{(m)},
    \end{aligned}
\end{equation}
where $\gamma_k^{(m)}$ is a step length computed by a line-search procedure and $\eta_k^{(m)}$ is the search direction.
Different recursive formulas for $\eta^{(m)}$ have been proposed, including the Fletcher-Reeves~\cite{Fletcher:64}, the Polak-Rib\'ere-\!Polyak~\cite{Polak:69,Polyak:69}, and the Dai-Yuan~\cite{DaiYuan:99} formulas. According to our tests, the Dai-Yuan formula demonstrates a significantly faster convergence rate for this phase retrieval problem. The Dai-Yuan formula is given by 
\begin{equation}\label{Eq:DaiYuan2}
\begin{aligned}
  \eta_k^{(m+1)}= -\nabla_{\psi_k} F(\psi^{(m+1)})+\frac{\|\nabla_{\psi_k} F(\psi_k^{(m+1)})\|_2^2}{\left\langle \left(\nabla_{\psi_k} F(\psi_k^{(m+1)})-\nabla_{\psi_k} F(\psi_k^{(m)})\right),\eta_k^{(m)}\right\rangle}\eta_k^{(m)}
  \end{aligned}
\end{equation}
with $\eta_k^{(0)}=-\nabla_{\psi_k} F(\psi_0)$. Each CG iteration step involves updates of the transmittance function of each angle, $\psi_k, k=0,\dots,N_\theta-1$, as well as the probe function $q$. The line search procedure guaranties that $F\left(\psi_0^{(m+1)},\dots, \psi_{N_\theta-1}^{(m+1)},q^{(m+1)}\right)\le F\left(\psi_0^{(m)},\dots, \psi_{N_\theta-1}^{(m)},q^{(m)}\right)$ on each iteration.

The initial guess for iterative schemes in holography problems significantly impacts the final reconstruction as it influences the optimization process. A poor initial guess can slow down convergence, primarily due to the slow reconstruction of low spatial frequencies, and increase the likelihood of the algorithm converging to a local minimum. This issue has been studied in~\cite{dora2024artifact}, where the authors explored methods with and without additional constraints, regularization, and Nesterov accelerated gradient to counteract the slow reconstruction of low spatial frequencies. In this work, we also rely on the initial guess for reconstruction. As an initial estimate for the holographic reconstruction of the sample transmittance function ($\psi_k, k=0,\dots,N_\theta-1$) we utilized the reconstruction generated by the Transport of Intensity Equation (TIE) method, commonly referred to as the Multi-Paganin method in the context of synchrotron beamlines. This approach was applied to the data following normalization by the reference image. It is worth noting that the Multi-Paganin method typically struggles to recover high-frequency components. Consequently, the presence of horizontal and vertical line artifacts in the data does not significantly impact the reconstruction quality. For the initial guess of the probe function $q$, we used the reference image propagated back to the sample plane 0. Although for processing experimental data, an initial probe approximation can be retrieved from previous measurements.

After reconstruction of all projections $\psi_k$, we can recover a 3D representation of the sample refractive index $u=\delta+i\beta$ by solving a tomography problem. 
We express the minimization of the tomography problem as
\begin{equation}\label{Eq:tomo}
  F(u)= \sum_k\left\|\Rop_k (u)-\frac{\nu}{2\pi i}\log(\psi_k)\right\|^2_2,
\end{equation}
where $\Rop_k$ is the Radon transform with respect to angle $\theta_k$, and $\nu$ is the wavelength.
Similar to the phase retrieval problem, the steepest ascent direction $\nabla_u F(u)$ can be obtained as follows,
\begin{equation}
  \nabla_u F(u) = 2\sum_k\Rop^*_k\left(\Rop_k (u) - \frac{\nu}{2\pi i}\log(\psi_k)\right).
  \label{Eq:gradtomo}
\end{equation}
The tomography problem can thus be also solved by considering the rapid convergence of CG methods that utilize the direction $\nabla_u F(u)$. For a consistent approach with the phase-retrieval problem, we employ the same Dai-Yuan formula given in equation~\eqref{Eq:DaiYuan2}.

Here, we assume that \mod{the phase shift of} $\psi_k$ takes the angle values in 0 to $2\pi$ range. Note that for larger objects and lower X-ray energies, phase unwrapping may be necessary. Apart from general unwrapping procedures~\cite{goldstein1988satellite,ghiglia1998two}, it is also possible to apply correction during reconstruction. One approach involves using a linear model, such as the TIE or CTF, to obtain an initial phase estimate without wrapping issues. During subsequent iterative optimization, it is assumed that the phase change remains small (less than $2\pi$) relative to the linear model. Consequently, phase unwrapping can be performed using the initial phase estimate after all iterations, although it is also possible to unwrap after each iteration.
Another approach leverages the integration with tomography, where explicit phase unwrapping becomes unnecessary. Reconstruction can be achieved from the derivative of the phase using a modified filter in the filtered back projection (FBP) algorithm, specifically a Hilbert filter. The derivative of the phase can be accurately approximated from the complex object alone. This method is detailed in the paper by researchers from cSAXS beamline (equations (12) and (13) in ~\cite{guizar2011phase}). \mod{A more recent and efficient approach is to use the refractive model~\cite{wittwer2022phase,aidukas2024high}, which directly reconstructs the object's refractive index, eliminating the need for phase unwrapping.}

\section{Numerical simulations}
In this section, we validate our approach through simulations. We compare reconstructions by the proposed method involving the probe reconstruction with the conventional approach involving division by the reference image. We also show the robustness of the proposed method to noise in measurements and demonstrate the convergence plots when solving the problem with the CG method described in the previous section. 
\begin{figure}[ht!]
\centering\includegraphics[width=1\textwidth]{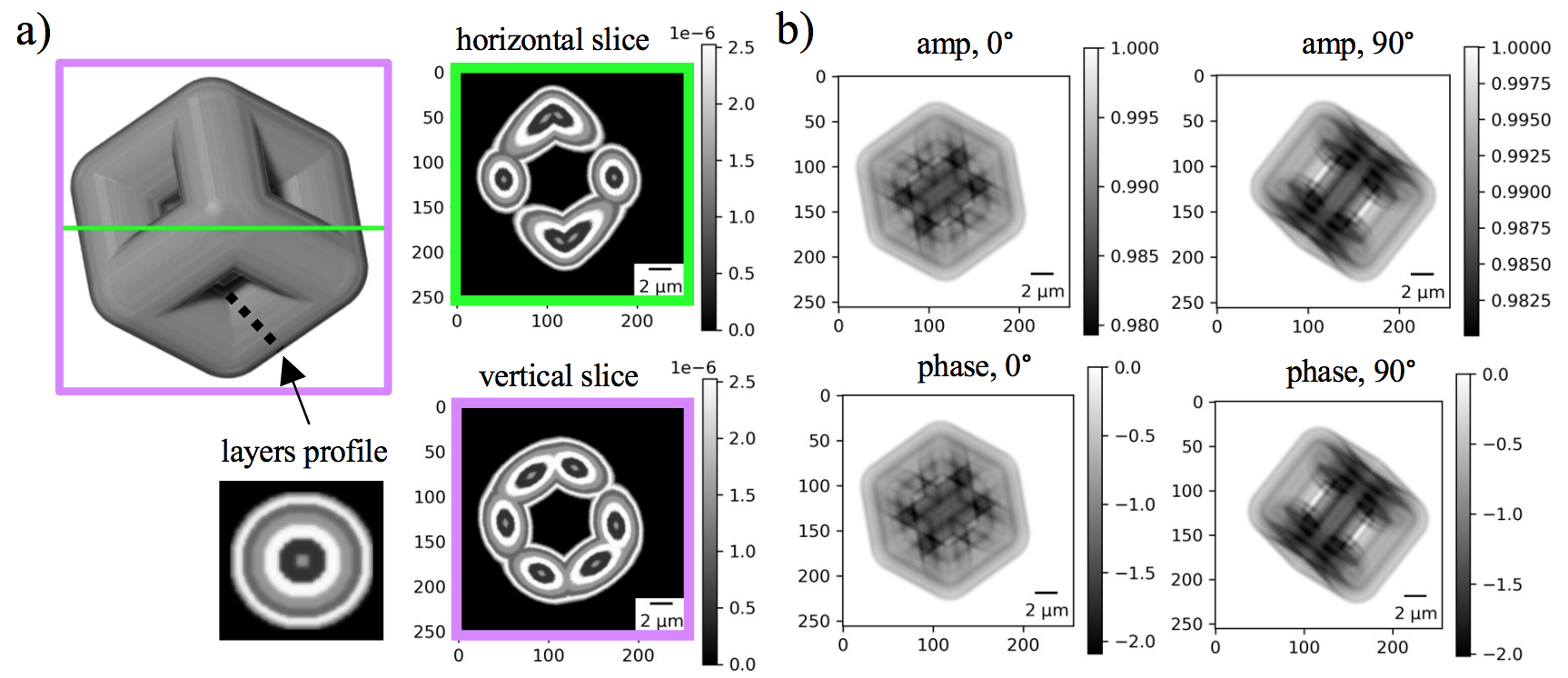}
\caption{Synthetic 3D cube object with layers of different phase component $\delta$ (a), amplitude and phase of the transmittance function for 0 and 90 degrees object rotation (b).}
\label{fig:3dsyn}
\end{figure}

For simulations, we constructed a 3D sample which mimic a cubic structure of a real ALD standard sample containing thin layers of different refractive index, see Figure~\ref{fig:3dsyn}a. The figure also shows examples of horizontal and vertical slices through the sample, marked with the green and blue colors, respectively, and a layers profile marked with red color. The colorbars are associated with the phase component $\delta$, the absorption component $\beta$ is 100 times lower. The whole volume size is $256\times256\times256$, where each voxel has a size of \SI{80}{\nano\metre} in each dimension. The transmittance function was computed as $\psi_k = e^{\frac{2\pi i}{\nu} \Rop_k u}$ for a wavelength of $\nu=\SI{7.27183e-11}{\metre}$, corresponding to X-ray energy of \SI{17.05}{\keV}. In Figure~\ref{fig:3dsyn}b, we illustrate examples of the amplitude and phase of the transmittance function for angles $\theta_k=0^\circ$ and $\theta_k=90^\circ$. Note that the sample exhibits very low absorption contrast (the amplitudes are close to 1), while the phase varies across a broader range, spanning from -2.0 to 0.0.

For modeling holography data on the detector we employed the PXM instrument settings from beamline ID16A at the ESRF. The settings are listed in Table~\ref{tab:pars}. In practice, $2\times2$ detector data binning is applied, resulting in a pixel size of \SI{3}{\micro\metre} after a 10-fold objective magnification.
For simulations with smaller data sizes $(256\times256$), we increased the pixel size to $3\times(2048/256)\,\SI{}{\micro\metre} = \SI{24}{\micro\metre}$, i.e. modelling $16\times16$ detector data binning. Tomography data were simulated at four sample planes for 180 angles, covering the angular span \mod{from 0 degrees to 180 degrees}.

\begin{table}[ht!]
    \centering
    \small{
    \begin{tabular}{|c|c|c|c|}
    \hline
         Energy& \SI{17.05}{\keV} \\
         Detector pixel size & \SI{15}{\micro\meter} \\
         Objective & 10$\times$ \\
         Detector size & 4096$\times$4096 px \\
         Detector size after binning & 256$\times$256 px (synthetic data),\\& 2048$\times$2048 px (experimental data) \\
         Focal spot to detector distance $Z$ & \SI{1.208}{\metre} \\          
         Number of sample planes & 4 \\
         Focal spot to sample distances $(z_j)$ & $4.026, 4.199, 4.890, \SI{6.325}{\milli\metre}$ \\         
         Corresponding geometric magnifications & $300,\, 287.7,\, 247,\, 191$\\          
         \hline
    \end{tabular}}
    \caption{The PXM instrument settings utilized for both simulating results and acquiring data at the beamline. }
    \label{tab:pars}
\end{table}

For simulations, we utilized the real probe function $q$ previously recovered through the near-field ptychography method~\cite{stockmar2013near}. Figure~\ref{fig:3dsyndata}a shows the amplitude and phase of the probe. Note the structure of the probe contains the horizontal and vertical features from the multilayer coated KB mirrors. In Figure~\ref{fig:3dsyndata}b we give examples of synthetically generated data using formula~\eqref{Eq:fwd_model_cone3}, while Figure~\ref{fig:3dsyndata}c demonstrates images for the ${d_j}/{d^r}$ term typically used during conventional reconstruction with formulation~\eqref{Eq:fwd_model_cone2}.  The image distortion is evident in the form of horizontal and vertical line artifacts, especially noticeable in zoomed-in regions.

\begin{figure}[ht!]
\centering\includegraphics[width=1\textwidth]{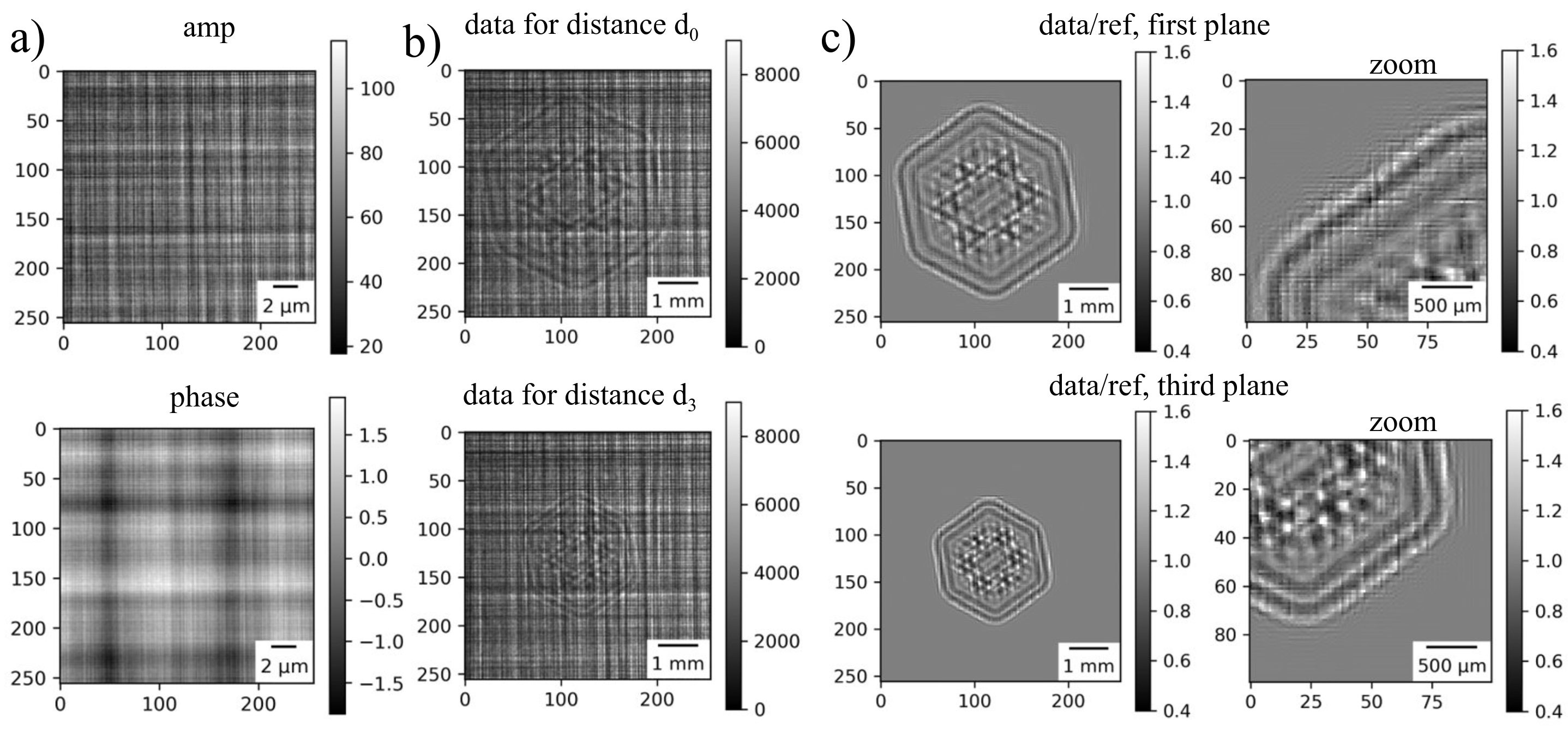}
\caption{Example of the real probe function from the KB-mirrors, recovered at distance $d_0$ (closest) from the focal spot (a), simulated data for the sample placed at shorter $(d_0)$ and longer $(d_3)$ distances from the focal spot (b), and (c) - the same data divided by the reference image, with zoomed regions demonstrating the artifacts.}
\label{fig:3dsyndata}
\end{figure}

\begin{figure}[ht!]
\centering\includegraphics[width=1\textwidth]{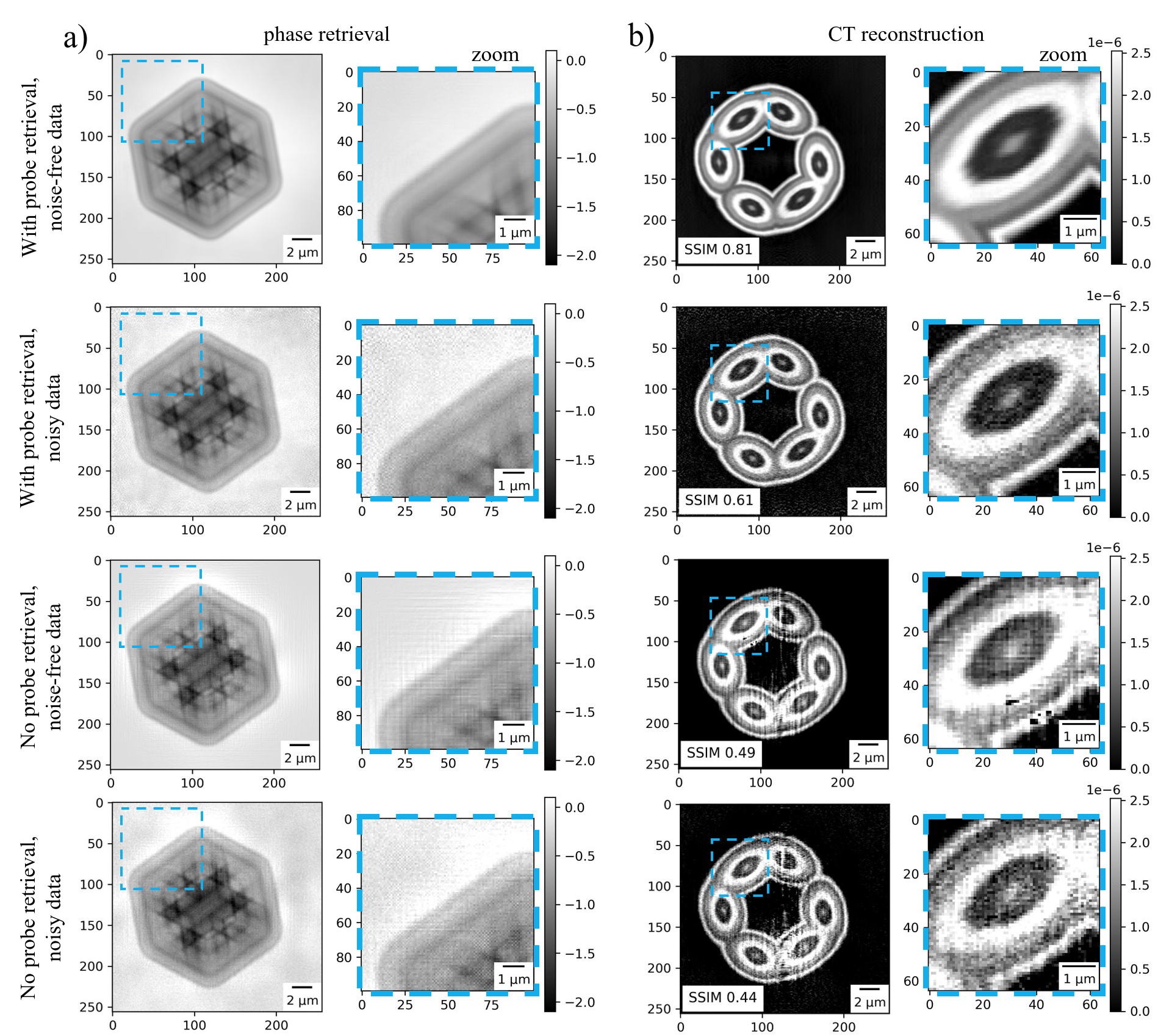}
\caption{Phase retrieval reconstruction using the proposed approach with probe retrieval and using the conventional approach with the division by the reference image, for both noise-free and noisy synthetic data (a); vertical slices through corresponding CT reconstructed volumes (b).}
\label{fig:3dsynrec}
\end{figure}

In Figure~\ref{fig:3dsynrec}, we present the reconstruction results, comparing the proposed approach, which simultaneously retrieves the sample transmittance function and probe function \mod{(minimizing functional \eqref{Eq:min1})}, with the conventional approach involving division by the reference image. \mod{The minimization functional for the conventional approach \eqref{Eq:fwd_model_cone2} naturally reads, upon adding sample shifts, as follows
\begin{equation}\label{Eq:fwd_model_cone2c}
  \tilde{F}_k(\psi_k) = \sum_{j=0}^{N_z-1}\left\|\left| \Pop_{\zeta_j}\big( \Sop_{t_{k,j}}(\psi_k)\big) \right|^2 -  \Mop_{1/m_j}\left(\sqrt{{d_{k,j}}/{d^r}}\right)\right\|_2^2.
\end{equation}
}
\mod{We remark that, in contrast to \eqref{Eq:min1} where the variable $q$ is shared across different projections resulting in a single minimization functional, the functionals in \eqref{Eq:fwd_model_cone2c} for $k=0,\dots,N_{\theta}-1$
are independent of each other and can be minimized separately using the CG method. }
 
Additionally, we evaluate the methods' robustness to Poisson noise in the data. Here, Poisson noise was introduced after dividing values by 20, mimicking low photon counts on the detector. Subsequently, the values were multiplied back by 20 to maintain consistency with the noise-free results.
Figure~\ref{fig:3dsynrec}a illustrates that phase retrieval by the proposed method significantly outperforms the conventional approach, both for noise-free and noisy data. 

In experimental holography data, the real part of the refractive index $\delta$, which signifies phase information, exhibits much higher contrast compared to the imaginary part $\beta$, which reflects sample absorption. Consequently, in tomography, the focus typically lies on solving the tomography problem only for the phase component. Corresponding CT reconstruction of $\delta$ obtained with \eqref{Eq:tomo} formulation confirms the improvement by the proposed method, as shown in Figure~\ref{fig:3dsynrec}b. We see that artifacts observed in the phase-retrieved reconstructions by the conventional approach are further amplified in the CT reconstructions.
For numerical validation, we calculated the Structural Similarity Index (SSIM) between the CT reconstructions and the ground truth object. SSIM values are depicted in the images, and notably, they are much higher for the proposed method. \mod{Notably, the SSIM index for the reconstruction of noisy data using the proposed method (0.61) surpasses that of the noise-free data reconstruction (0.49), highlighting the effectiveness and robustness of the proposed approach.}

\begin{figure}[ht!]
\centering\includegraphics[width=1\textwidth]{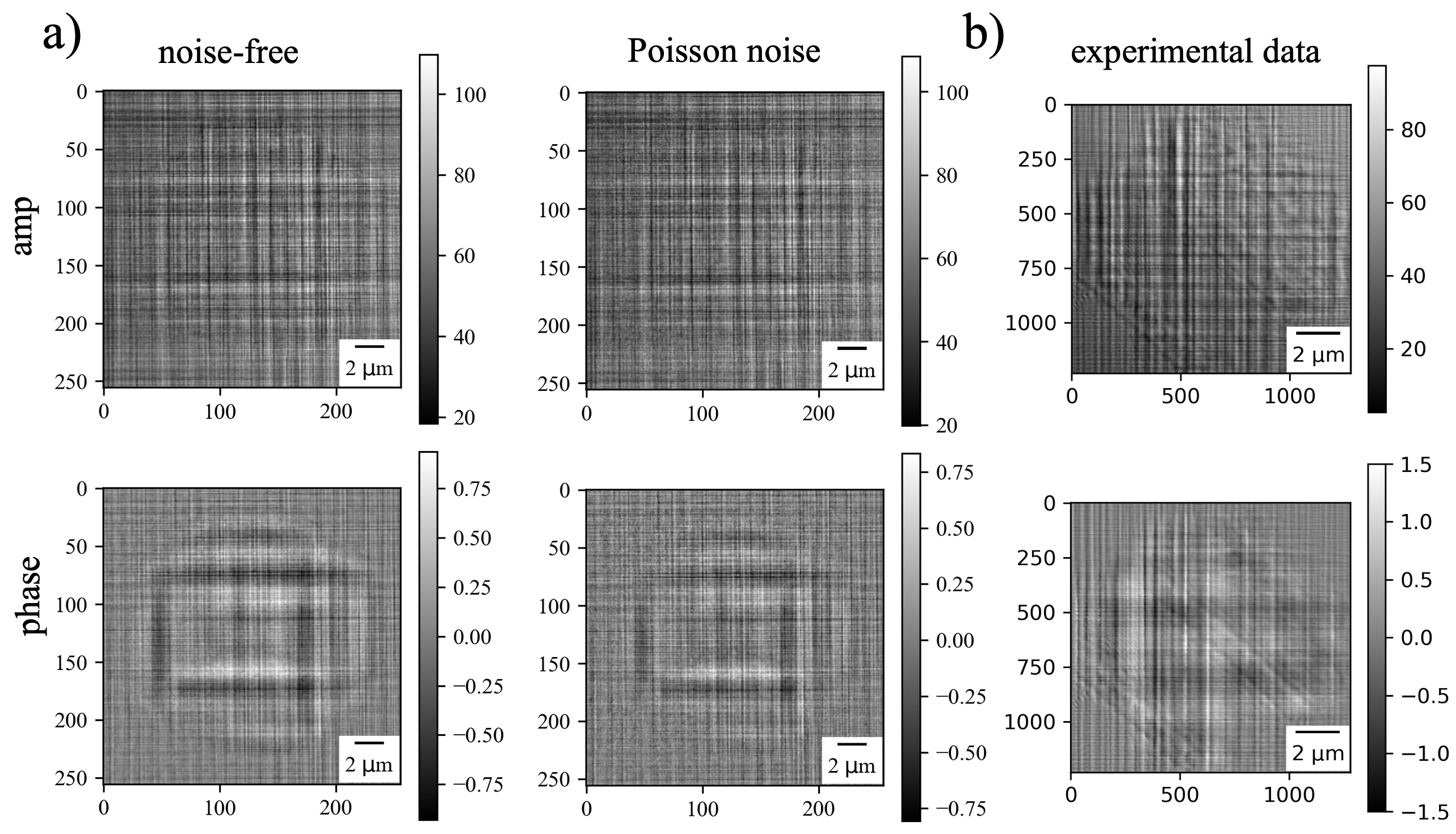}
\caption{Reconstructed probe function with the proposed iterative phase retrieval reconstruction: a) for noise-free and noisy synthetic data, b) for 3D ALD standard sample data acquired at beamline ID16A at the ESRF.}
\label{fig:3dsynrecill}
\end{figure}

The reconstructed amplitudes and phases of the probe by the proposed method are depicted in Figure~\ref{fig:3dsynrecill}a. The reconstructed probe is not accurate when compared to the ground truth shown in Figure~\ref{fig:3dsyndata}a. Notably, there is a lack of reconstruction at the corners, as neither projection of the sample during rotation passes through these regions. More accurate reconstruction can be obtained for samples covering the whole field of view. However, it is important to note that precise probe retrieval is not necessary. Rather, the goal is to find a probe that ensures data consistency to the greatest extent possible.

To obtain the phase retrieval reconstructions depicted in Figure~\ref{fig:3dsynrec}, we employed 1000 iterations of the CG method for both noise-free and noisy data cases within the proposed approach including probe retrieval. As an initial guess for the sample transmittance functions $\psi_k, k=0,\dots,N_\theta-1$, we utilized the reconstruction generated by the Multi-Paganin method, while the initial probe estimate was computed by propagating the reference image back to the sample plane 0. Reconstruction by the conventional approach with division by the reference image was also first done by the Multi-Paganin method, followed by 100 CG iterations for solving \eqref{Eq:fwd_model_cone2} to adjust high frequencies in reconstructions. No convergence was observed after 100 iterations.

\begin{figure}[ht!]
\centering\includegraphics[width=1\textwidth]{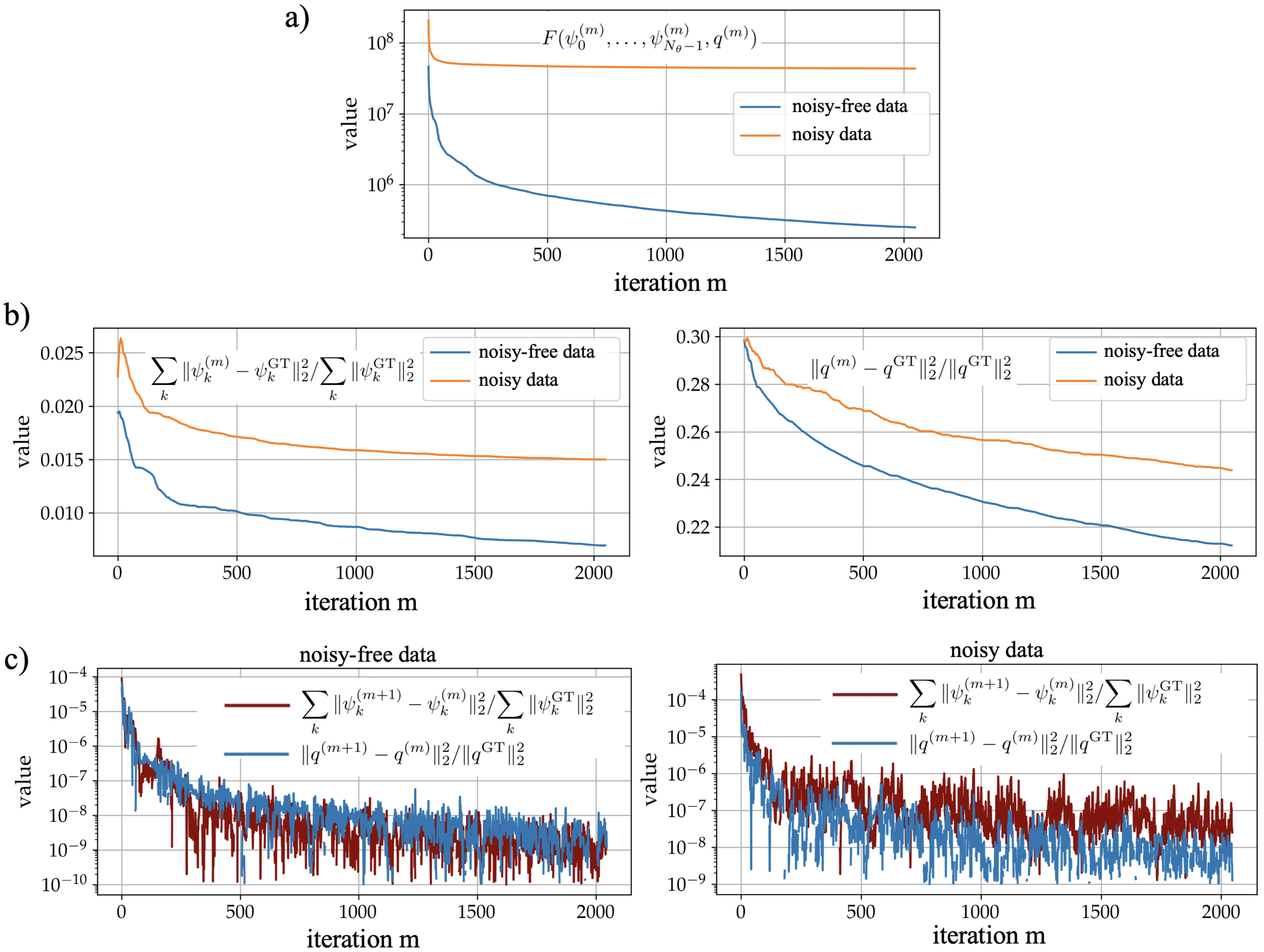}
\caption{Convergence analysis for the phase retrieval reconstruction with probe retrieval: a) \mod{convergence of the objective function \eqref{Eq:min1}}; b) relative errors compared to the ground truth (GT) for the object and probe; c) relative differences between the objects and probe at $m$ and $m+1$ iterations for noise-free and noisy data cases. Gaps in the plots correspond to no difference between iterations.}
\label{fig:conv}
\end{figure}

In Figure~\ref{fig:conv}, we investigate the convergence behavior of the proposed method. The convergence of the minimization functional \eqref{Eq:min2} is illustrated in Figure~\ref{fig:conv}a. We observe a smooth decrease in the functional for noise-free data, wherein the decrease is assured by the fact that $F\left(\psi_k^{(m+1)},\dots,\psi_k^{(m+1)},q^{(m+1)}\right)\le F\left(\psi_k^{(m)},\dots,\psi_k^{(m)},q^{(m)}\right)$ at each iteration, thanks to the line search procedure. For the noisy data case, we observe that the minimization functional is not changed significantly after 100-200 iterations. This is due to the fact that there is a big difference between the data with noise and data modeled by propagating the object to the detector on each iteration. 
Examining the relative errors compared to the ground truth (GT) displayed in Figure~\ref{fig:conv}
b, we still observe a notable convergence rate even after 1000 iterations.  Additionally, in Figure~\ref{fig:conv}c, we present the relative differences between the object and probe at iterations $m$ and $m+1$. These plots display oscillatory patterns because the problem is concurrently solved for the object and probe using the CG method, with the gradient step direction chosen to expedite the convergence of the target minimization functional. 
At certain iterations, the object function remains unchanged, resulting in gaps in the plots. This indicates that the optimal object representation has been found for the current probe. Object updates resume after subsequent updates of the probe. A similar phenomenon of zero-update behavior also occurs for the probe function. These simultaneous updates of the object and illumination are advantageous for avoiding local minima.

Performing numerous CG iterations is time-consuming, especially when processing experimental data at full resolution. In addition to leveraging GPU accelerations within the developed software package, which provides GPU-accelerated operators, we have explored several optimization strategies to enhance the practical usability of the method.
Firstly, one optimization involves providing a more accurate initial guess for the probe, rather than relying solely on the calculation obtained by back-propagating the reference image. This initial guess can be sourced from previous experiments or from experiments where it was recovered using the Near-Field Ptychography method~\cite{stockmar2013near}. Subsequently, the iterative scheme adjusts this initial guess to better align with the current measurements, typically within 100-200 iterations. Secondly, the probe can be approximated by considering only a limited set of projections. In practice, we have found that using 20-50 projections can yield satisfactory results while also significantly reducing the total reconstruction time.
Thirdly, we observed that slow convergence often results from the gradual fitting of low frequencies in the object. To address this, we implemented a hierarchical reconstruction approach involving data binning and upsampling. In this approach, we perform a certain number of iterations on binned data and then extrapolate the recovered object with the probe function to a grid twice as dense. This result serves as an initial guess for the next set of iterations, where the data are less binned. The final object and probe functions are reconstructed on dense grids without binning.
As an illustrative example, for processing data from a $2048\times2048$ detector, one could initiate reconstruction with binning $8\times8$ for 1000 iterations. Subsequently, the reconstructions are upsampled, and iteration continues with data binned by $4\times4$ for 500 iterations, and so forth. Finally, reconstruction without binning can be completed with 125 iterations. This scheme yields reconstructions of the same quality and resolution as those obtained in full resolution without binning, which typically require more than 1000 iterations.

\section{Experimental data processing}

To demonstrate the efficiency of the proposed method in experimental data processing, we examined the reconstruction of a 3D standard sample fabricated using Atomic Layer Deposition (ALD)~\cite{becker2018atomic}. This sample was created using the Photonic Professional GT 3D printer from Nanoscribe, which allows for micro- and nanofabrication via two-photon polymerization. The ALD process involved depositing ZnO and Al$_2$O$_3$ layers of varying thickness (4-\SI{80}{\nano\metre}) onto a cubic pattern. In \cite{becker2018atomic}, the quality of the deposition was verified using  X-ray Reflectivity (XRR), Scanning Electron Microscopy (SEM) and Transmission X-ray Microscopy (TXM). Notably, TXM full-field imaging only allowed for separating layers of down to \SI{60}{\nano\metre} thickness due to the limited resolution defined by the zone plate.

The same 3D ALD standard sample was measured with the PXM instrument at ID16A beamline of the ESRF. The instrument settings are listed in Table~\ref{tab:pars}, examples of data acquired by the detector for different sample planes are shown in Figure~\ref{fig:setup}. Projection data for each sample plane were collected for 1500 angles spanned over 180 degrees. Exposure time per projection was \SI{0.1}{\s}. Pixel size after magnification is \SI{10}{\nano\metre}. The data acquisition was done in a step-scan mode, where the sample was randomly shifted for each projection and all shifts were recorded. This procedure helps in reducing ring artifacts in reconstructions by the conventional approach, however, it is not always necessary when the sample reconstruction is done together with the probe retrieval. 

\begin{figure}[ht!]
\centering\includegraphics[width=1\textwidth]{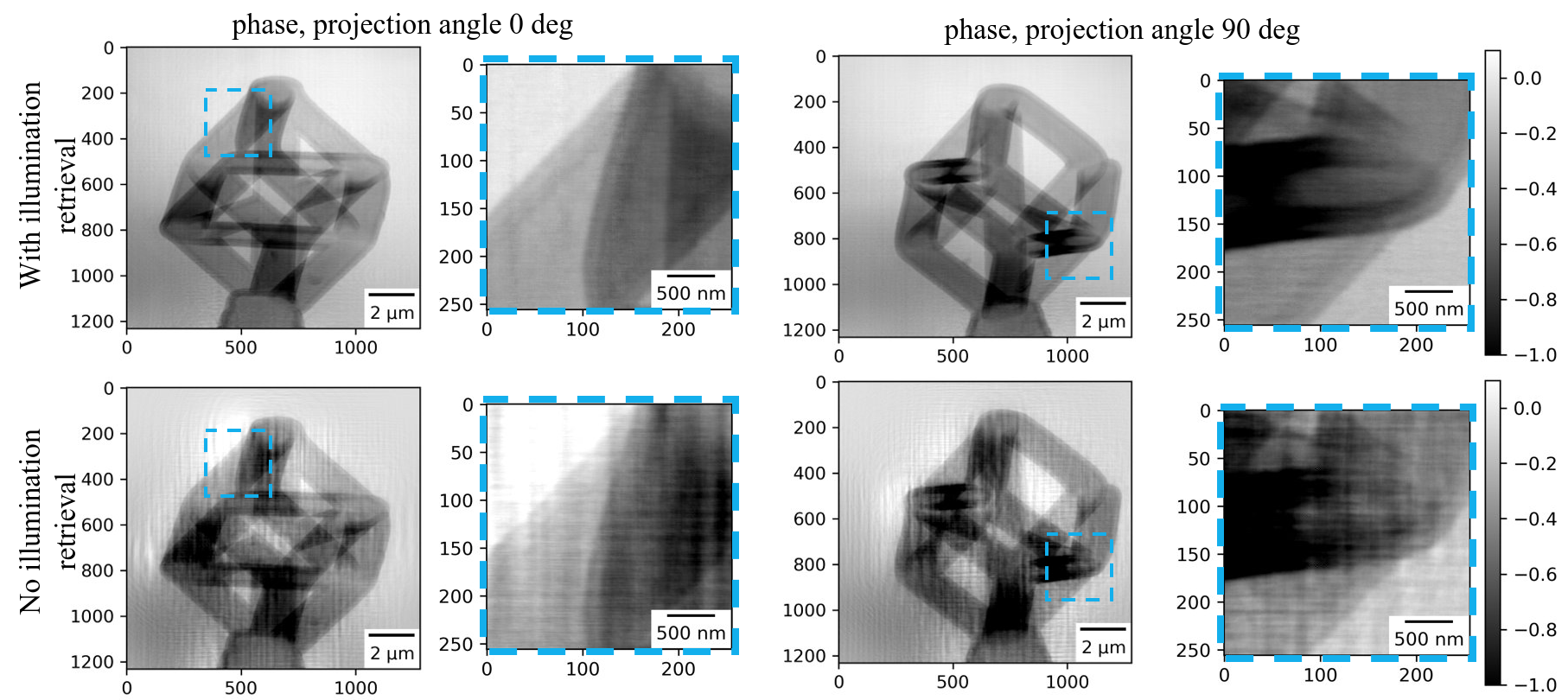}
\caption{Phase retrieval for the 3D ALD sample data measured at ID16A beamline by the proposed method with probe retrieval (first row), and by the conventional approach without probe retrieval (second row).}
\label{fig:recrealphase}
\end{figure}
Phase retrieval reconstructions using the proposed method were conducted independently in angular chunks consisting of 150 angles. This number of angles proved sufficient for obtaining accurate probe approximations while also optimizing computational resource utilization within a single computing node. This approach also facilitates easy parallelization across multiple GPUs and computing nodes. Additionally, we adopted a hierarchical reconstruction approach, as described in the previous section. Reconstruction was started with binning $8\times 8$ for 1000 iterations, and ended up without binning for 125 iterations. As for synthetic data in the previous section, we employed the reconstruction produced by the Multi-Paganin method as the initial guess for the sample transmittance functions $\psi_k, k=0,\dots,N_\theta-1$. For the probe function, the initial guess was the reference image propagated back to sample plane 0. The total reconstruction time on two computing nodes, each equipped with 4 Tesla A100 GPUs, was approximately 4 hours. The subsequent tomographic reconstruction involved 500 CG iterations that were completed for the whole volume within 15 minutes.

In the conventional reconstruction approach involving division by the reference image, we utilized the Multi-Paganin method and 100 CG iterations to solve \eqref{Eq:fwd_model_cone2} for fitting high frequency components.

Figure~\ref{fig:recrealphase} presents a comparison of phase retrieval reconstructions from two methods, illustrating the phase component of the transmittance function for angles $0^\circ$ and $90^\circ$. The significant improvement achieved with the new method is evident. This enhancement is further demonstrated in the 3D CT reconstructed volume, specifically illustrating \mod{the real part, $\delta$, of the complex refractive index}. The reconstructed amplitudes and phases of the probe function obtained using the proposed method are shown in Figure~\ref{fig:3dsynrecill}b. \mod{Similar to the synthetic data results in the previous section, the probe function is not reconstructed at the corners, as no projection of the sample during rotation passes through these regions.} The recovered probe ensures data consistency between different sample planes and projection angles.

\begin{figure}[ht!]
\centering\includegraphics[width=1\textwidth]{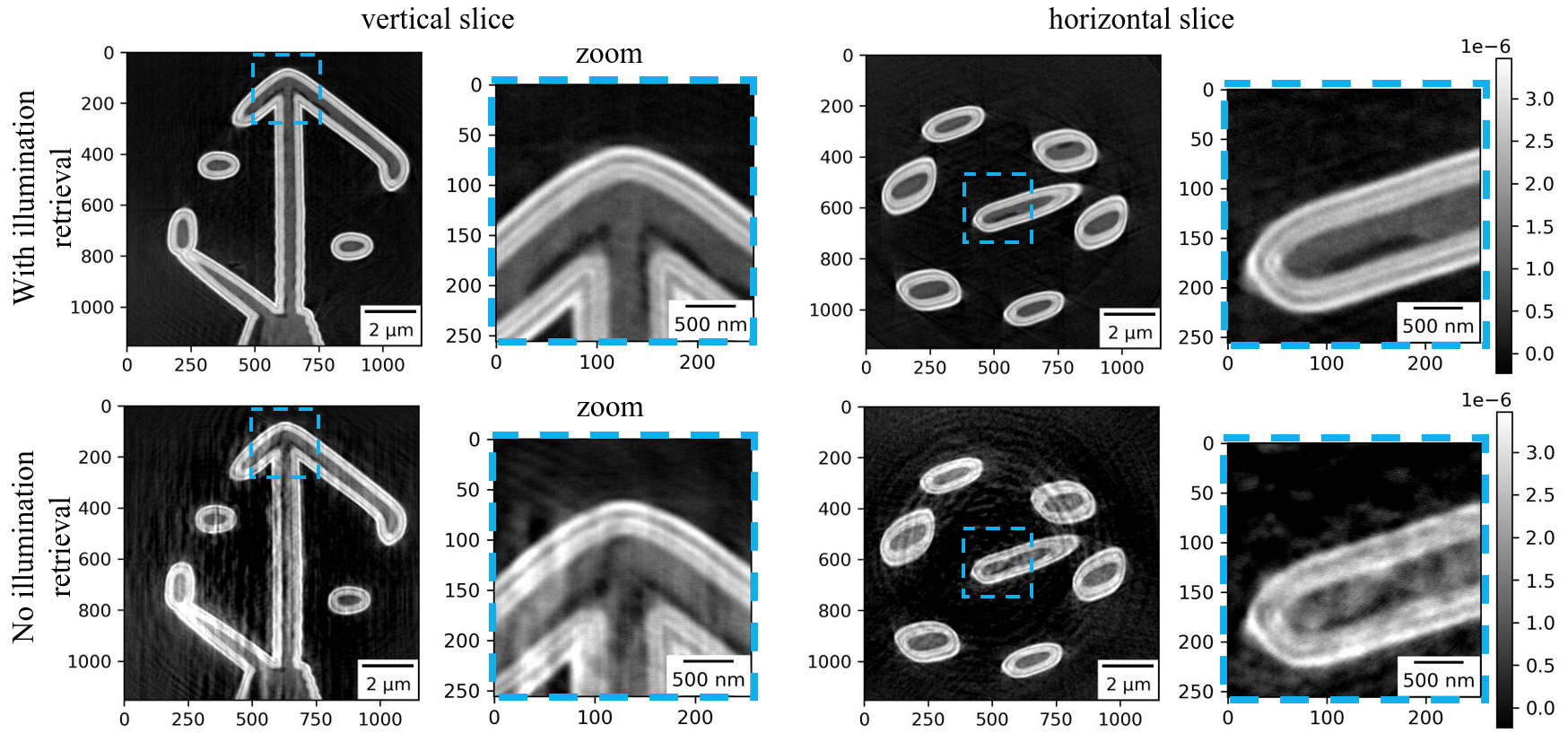}
\caption{Vertical and horizontal slices through CT reconstructed volumes of the 3D ALD sample, applied after phase reconstruction with probe retrieval (first row), and after phase reconstruction without probe retrieval (second row). }
\label{fig:recreal3D}
\end{figure}

The improved quality and resolution obtained with the new method enable a more precise analysis of the reconstructed volume. In Figure~\ref{fig:recrealanalysis}, we present a 3D rendering of the sample in ORS Dragonfly package, along with a line profile through one of the cube borders. This border comprises 10 ALD alternating layers of ZnO and Al$_2$O$_3$ compounds, with the outer layers reaching thicknesses of up to \SI{80}{\nano\metre} and the inner layers as thin as \SI{5}{\nano\metre}. The tabulated values for the phase component $\delta$ of the refractive index at \SI{17.05}{\keV} energy for ZnO and Al$_2$O$_3$ are \SI{3.763e-06}{} and \SI{2.805e-06}{}, respectively.
The targeted and measured thicknesses of the layers are listed in the table on the right side of the figure, sourced from \cite{becker2018atomic}, where the measured values were obtained using the X-ray reflectivity (XRR) method. The line profile through a cube border reconstructed with the new method clearly demonstrates the separation of layers c7-c10, with thicknesses as low as \SI{40}{\nano\metre}, while layers c1-c6 with smaller thicknesses appear merged into a single feature. In contrast, with the conventional method, only layers c9-c10 with a thickness of \SI{80}{\nano\metre} can be separated. It is important to note that in some regions, the artifacts when using the conventional method are so strong that it becomes impossible to separate any layer.

\begin{figure}[ht!]
\centering\includegraphics[width=1\textwidth]{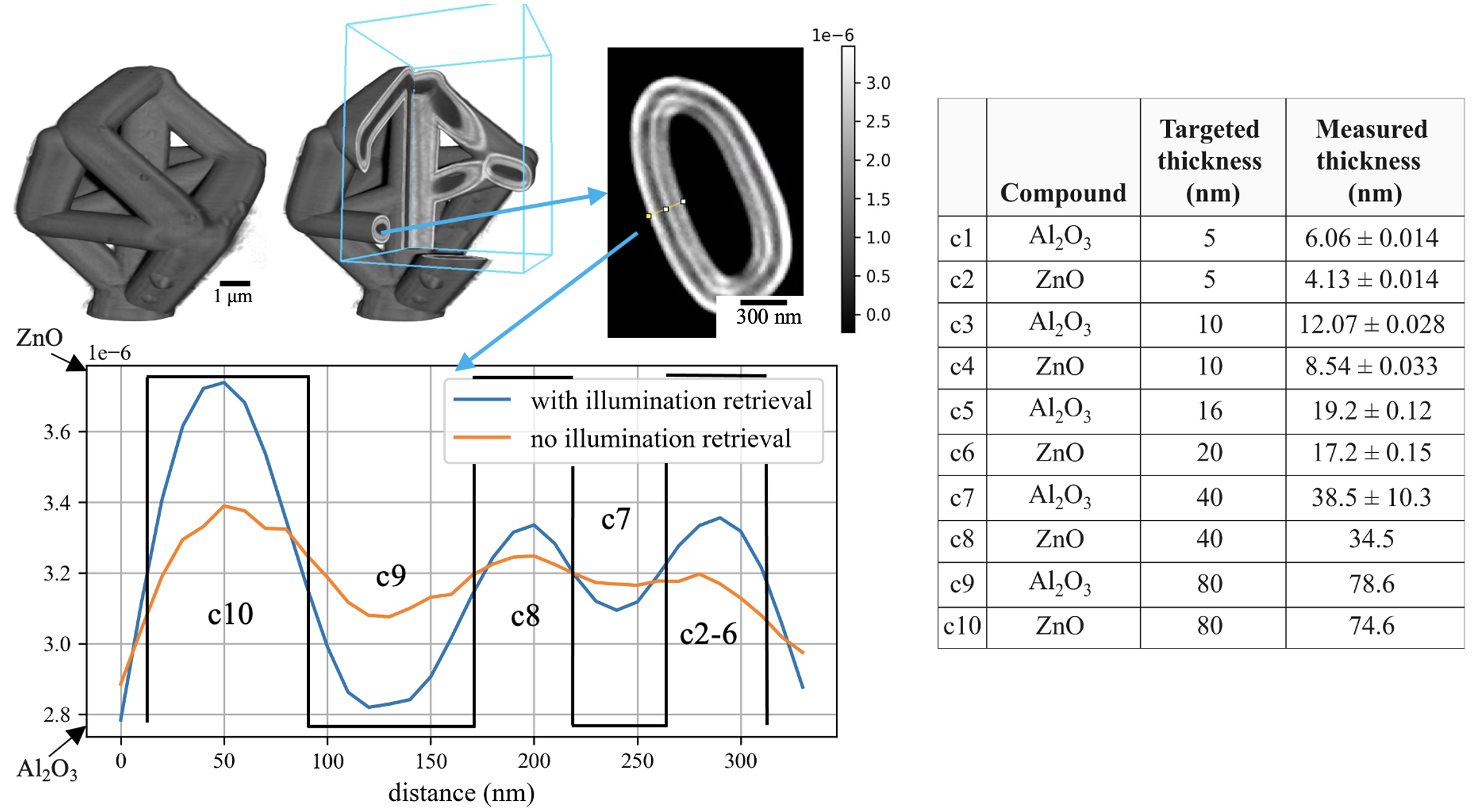}
\caption{3D rendering of the 3D ALD sample reconstruction, estimation of the layers thickness, and comparing the thickness to the ones provided in Table II from \cite{becker2018atomic}. The black line indicates the theoretical profile of the layers based on the table.}
\label{fig:recrealanalysis}
\end{figure}

\section{Conclusions and outlook}
In this paper, we demonstrated the efficacy of combining phase retrieval reconstruction with probe retrieval to achieve higher quality and resolution results in holotomography data. 

The reconstruction scheme is implemented in our Python-based HolotomocuPy package publicly available at \url{https://holotomocupy.readthedocs.io/}. The package features GPU-accelerated implementations of basic operators for wavefront propagation, data alignment, and tomographic data processing. Regular linear algebra operations were accelerated using the cuPy library. Additionally, more complex functionalities are implemented as CUDA kernels called through cuPy. The package offers a Jupyter notebook interface, enabling convenient implementation of new methods using the existing basic operators and adding acquisition specific or domain functionality. 

Synthetic data results, utilizing real probe function from the KB mirrors, demonstrated a significant improvement in the final 3D reconstruction of the sample volume, with an SSIM index of 0.81 compared to 0.49 for the conventional method. Moreover, the method exhibited superior performance and robustness to Poisson noise in experimental data, with an SSIM index of 0.61 compared to 0.44 for the conventional method. We also provided guidance on optimizing the number of iterations and convergence rate.

For experimental data acquired with the PXM instrument at beamline ID16A of the ESRF, we showcased the capability to resolve features as small as \SI{40}{\nano\metre} using the new approach. This represents a notable enhancement over previous full-field X-ray imaging techniques, such as the Transmission X-ray Microscope, which could only resolve larger features of \SI{80}{\nano\metre}~\cite{becker2018atomic}. The thickness of the resolved layers aligned with the values presented in Table II from~\cite{becker2018atomic}. Additionally, the recovered real component $\delta$ of the complex refractive index for ZnO and Al$_2$O$_3$ layers corresponded to the tabulated values.

While our method has demonstrated remarkable results, there are still avenues for further improvement. In reconstruction, we assumed that the probe function remains constant during data acquisition, which is a valid assumption for synchrotrons of new generation and well-optimized instruments like the ID16A beamline. However, potential sources of probe variation, such as changes in the ring filling, beam and optics stability, or sample stability, could still impact results. In this case, additional pre-processing procedures could be applied to identify corresponding shifts or multipliers and incorporate them into the iterative schemes as new operators.
Another potential improvement lies in jointly solving the holography and tomography problems. The identity in \eqref{Eq:fwd_model_cone3} is then modified as
\begin{equation}\label{Eq:fwd_model_cone3_future}
  \left|\frac{z_0}{ Z}\Mop_{m_0}\left(\Pop_{\zeta_j/\tilde m_j^2}\left(\Pop_{\omega_j}(q)\cdot\Mop_{1/\tilde m_j}\left(e^{\frac{2\pi i}{\nu} \Rop_k (u)}\right) \right)\right)\right|^2 = d_{j,k}, 
\end{equation}
and solved with respect to the object refractive index $u$ and probe $q$.
This approach enables solving the entire reconstruction problem using the Alternating Direction Method of Multipliers (ADMM), which splits the problem into local holography and tomography subproblems. These subproblems are then coordinated through dual variables to find a solution for the original problem. This scheme improves convergence rates and allows for reducing the number of measurements needed to achieve high-quality and high-resolution results. Similar techniques have been successfully applied in related fields, such as 3D ptychography problems~\cite{Aslan:19,nikitin2019photon} and tomography problems with nonrigid alignment~\cite{nikitin2021distributed}. \mod{The efficient phase unwrapping method based on the refractive model from ~\cite{wittwer2022phase} will also be adapted for the proposed scheme and implemented in HolotomocuPy package.}
\mod{Another optimization in reconstruction involves automating the adjustments of sample-to-focal spot distances $z_j$ and probe/sample shifts $s_{k,j},~t_{k,j}$. Accurately determining these distances in practice is challenging without dedicated measurements, as the translational and rotational alignments of instruments, particularly the KB mirrors, are continually affected by thermal loads and cumulative position drifts. Our future work will focus on integrating position correction directly into the iterative reconstruction process. We also plan to generalize this technique for holotomography problems with nonrigid projection data alignment.}

\appendix
\section{Appendix A}

\subsection{Cone-beam propagation, ideal case}\label{cbp}

In this section we first review the Fresnel scaling theorem, for completeness, and then proceed to derive the formula for propagating both probe and sample to the detector, in the cone-beam setting. We remind the reader that, in the near-field approximation, a field $\psi(\boldsymbol{x})$ at $z=0$ is propagated to a plane at $z=Z$ via convolution with the Fresnel kernel \begin{equation}\label{defPz}
P_Z(\boldsymbol{x})=\frac{e^{ikZ}}{i\lambda Z}\exp\left(i\pi|\boldsymbol{x}|^2/\lambda Z\right)
\end{equation}
where $\boldsymbol{x}=(x_1,x_2)$, $\lambda$ is the wavelength and $k=2\pi/\lambda$ (see e.g.~equation (1.41) \cite{paganin2006coherent}). In the notation of the manuscript we thus have \begin{equation}\label{defGz}
    \Pop_Z(\psi)=\psi * P_Z.
\end{equation}
Physically, the only difference between modeling a cone beam and a parallel beam is the shape of the incoming field.  Given a perfect monochromatic coherent cone beam emanating at the origin, the field in the object plane at $z_j$ is given by the amplitude of the incoming beam times 
$\Pop_{z_j}\big(\delta_0\big)=  P_{z_j}
$ (before hitting the object), where  $\delta_0$ denotes the Dirac distribution.  

After passing the object $\psi$, the total field is thus $P_{z_j}\cdot \psi $ which is propagated to $Z$ by convolution with $P_{Z-z_j}$, giving rise to the field  $\Pop_{Z-z_j}\left( P_{z_j}\psi \right)$ on the detector. Letting $\simeq$ denote that two functions are the same except for a unimodular factor, the Fresnel scaling theorem reads as follows.

\begin{theorem}\label{t2}
 The field on the detector is given by
\begin{equation}\label{e1}
    \Pop_{Z-z_j}\left(P_{z_j}\cdot \psi \right)(\boldsymbol{x})\simeq P_Z(\boldsymbol{x})\Pop_{\zeta_j}(\psi)\left( \boldsymbol{x}/m_j\right)
\end{equation}
where $m_j=Z/z_j$ is the so called ``magnification''-factor and $\zeta_j=(Z-z_j)/m_j$ is the modified distance.
\end{theorem}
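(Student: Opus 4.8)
The plan is to expand the convolution defining $\Pop_{z_{2,j}}$ explicitly, complete the square in the integration variable, and recognize what remains as a rescaled Fresnel propagation. Throughout I would write $z_1=z_{1,j}$, $z_2=z_{2,j}$, $m=m_j=Z/z_1$, $\zeta=\zeta_j=z_2/m=z_1z_2/Z$, and use that the focal-spot-to-detector distance satisfies $Z=z_1+z_2$ (see Table~\ref{tab:pars}). Since the conclusion is asserted only up to a unimodular factor, every scalar of modulus one may be discarded along the way; in particular the prefactor $e^{ikz_2}/i$ of $P_{z_2}$ from \eqref{defPz} is dropped, leaving $P_{z_2}(\boldsymbol{u})\simeq\tfrac{1}{\lambda z_2}\exp(i\pi|\boldsymbol{u}|^2/\lambda z_2)$.

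First I would write, using \eqref{defGz} and the reduced kernel,
\begin{equation*}
\Pop_{z_2}\bigl(q_c\exp(i\pi|\boldsymbol{\cdot}|^2/\lambda z_1)\,\psi\bigr)(\boldsymbol{x})\simeq\frac{q_c}{\lambda z_2}\int\psi(\boldsymbol{y})\exp\!\left(\frac{i\pi|\boldsymbol{y}|^2}{\lambda z_1}+\frac{i\pi|\boldsymbol{x}-\boldsymbol{y}|^2}{\lambda z_2}\right)d\boldsymbol{y},
\end{equation*}
and then manipulate the exponent. Collecting the terms quadratic in $\boldsymbol{y}$ via the lens identity $\tfrac1{z_1}+\tfrac1{z_2}=\tfrac{z_1+z_2}{z_1z_2}=\tfrac1\zeta$ and completing the square gives
\begin{equation*}
\frac{|\boldsymbol{y}|^2}{z_1}+\frac{|\boldsymbol{x}-\boldsymbol{y}|^2}{z_2}=\frac1\zeta\left|\boldsymbol{y}-\frac{\boldsymbol{x}}{m}\right|^2+|\boldsymbol{x}|^2\left(\frac{1}{z_2}-\frac{\zeta}{z_2^2}\right).
\end{equation*}
The crucial algebraic checks are that $z_2-\zeta=z_2-\tfrac{z_1z_2}{Z}=z_2\cdot\tfrac{Z-z_1}{Z}=\tfrac{z_2^2}{Z}$, so the residual quadratic coefficient collapses to $\tfrac{1}{z_2}-\tfrac{\zeta}{z_2^2}=\tfrac1Z$, and that the completed-square shift is $\tfrac{\zeta\boldsymbol{x}}{z_2}=\tfrac{z_1\boldsymbol{x}}{Z}=\tfrac{\boldsymbol{x}}{m}$.

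Substituting this back, the factor $\exp(i\pi|\boldsymbol{x}|^2/\lambda Z)$ pulls out of the integral, and what remains is $\lambda\zeta$ times the convolution of $\psi$ with the Fresnel kernel $P_\zeta$ evaluated at $\boldsymbol{x}/m$, that is, $\lambda\zeta\,\Pop_\zeta(\psi)(\boldsymbol{x}/m)$ up to the unimodular constant $i\,e^{-ik\zeta}$. Collecting the scalar prefactors, $\tfrac{1}{\lambda z_2}\cdot\lambda\zeta=\tfrac{\zeta}{z_2}=\tfrac{z_1}{Z}=\tfrac1m$, which yields exactly \eqref{e1}. The main obstacle here is bookkeeping rather than conceptual: one must propagate the $\simeq$ relation consistently so that no modulus-one factor is accidentally retained, and confirm the two identities $z_2-\zeta=z_2^2/Z$ and $\zeta/z_2=1/m$, since these are precisely what force the leftover quadratic phase to coincide with free-space propagation over the total distance $Z$ and the amplitude to carry the correct $1/m_j$ magnification scaling.
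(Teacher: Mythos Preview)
Your proof is correct and follows essentially the same route as the paper: expand the Fresnel convolution, complete the square in the integration variable using $1/z_1+1/z_2=1/\zeta$, identify the residual quadratic phase as $|\boldsymbol{x}|^2/Z$, and collect the scalar prefactor $\zeta/z_2=1/m$. The only cosmetic difference is that the paper organizes the algebra by first expanding $|\boldsymbol{x}-\boldsymbol{w}|^2$ and then completing the square, whereas you proceed directly; the key identities and the final bookkeeping are identical.
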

We remark that the above theorem looks slightly different from the presentation in the standard reference \cite{paganin2006coherent} (Appendix B). This is because the (unimodular) factor $P_Z$ can be ignored if one is only interested in intensities. However, for the treatment here it is crucial to keep this factor, since in the next section we shall use the formula to propagate between multiple planes.

\begin{proof}
We have \begin{equation}\label{e2}\begin{aligned}
    & \Pop_{Z-z_j}\left(P_{z_j}\cdot \psi \right)(\boldsymbol{x}) \simeq\\&\frac{1}{\lambda^2 z_j(Z-z_j)}\int \psi(\boldsymbol{w})\exp\left(i\pi|\boldsymbol{w}|^2/\lambda z_j\right)\exp\left(i\pi|\boldsymbol{x}-\boldsymbol{w}|^2/\lambda (Z-z_j)\right)~d\boldsymbol{w}
\end{aligned}\end{equation}
where the exponential functions can be rewritten
\begin{align*}
    &\exp\left(i\pi|\boldsymbol{w}|^2/\lambda z_j\right)\exp\left(i\pi|\boldsymbol{x}-\boldsymbol{w}|^2/\lambda (Z-z_j)\right)=\\&
    \exp\left(\frac{i\pi}{\lambda}\left(\left(\frac{1}{z_j}+\frac{1}{Z-z_j}\right)|\boldsymbol{w}|^2-\frac{2}{Z-z_j}\boldsymbol{x}\cdot\boldsymbol{w}+\frac{1}{Z-z_j}|\boldsymbol{x}|^2 \right)\right)=\\&
    \exp\left(\frac{i\pi}{\lambda}\left(\left(\frac{1}{z_j}+\frac{1}{Z-z_j}\right)|\boldsymbol{w}-\frac{1/(Z-z_j)}{1/z_j+1/(Z-z_j)}\boldsymbol{x}|^2 \right)\right)\cdot\\&\hspace{2cm}\exp\left(\frac{i\pi}{\lambda}\left(\left(\frac{1}{Z-z_j}-\frac{1/(Z-z_j)^2}{1/z_j+1/(Z-z_j)} \right)|\boldsymbol{x}|^2 \right)\right)
\end{align*}
Now note that $1/\zeta_{j}=1/z_j+1/(Z-z_j)$, that $$\frac{1/(Z-z_j)}{1/z_j+1/(Z-z_j)}=\frac{z_j}{Z}$$ and $$\frac{1}{Z-z_j}-\frac{1/(Z-z_j)^2}{1/z_j+1/(Z-z_j)}=\frac{1}{Z-z_j}\left(1-\frac{z_j}{Z}\right)=\frac{1}{Z}$$
to rewrite the above as
\begin{equation}\label{e3}
    \exp\left(\frac{i\pi}{\lambda}\left(\frac{1}{\zeta_{j}}|\boldsymbol{w}-\frac{z_j}{Z}\boldsymbol{x}|^2 \right)\right)\exp\left(\frac{i\pi}{\lambda}\left(\frac{|\boldsymbol{x}|^2}{Z} \right)\right)\simeq \lambda\zeta_j  P_{\zeta_{j}}\left(\boldsymbol{w}-\frac{z_j}{Z}\boldsymbol{x}\right)\exp\left(i\pi|\boldsymbol{x}|^2/\lambda Z\right).
\end{equation}
Inserting this in \eqref{e2} and canceling factors we get
$$ \simeq \frac{\zeta_j}{\lambda z_j(Z-z_j)}\int \psi(\boldsymbol{w})P_{\zeta_{j}}\left(\boldsymbol{w}-\frac{z_j}{Z}\boldsymbol{x}\right)\exp\left(i\pi|\boldsymbol{x}|^2/\lambda Z\right)~d\boldsymbol{w}$$
Upon noting that $\frac{\zeta_j}{z_j(Z-z_j)}=\frac{1}{Z}$, the above boils down to \eqref{e1}, and the proof is complete.

\end{proof}

Given $m\in \mathbb{R}$ let us denote by $\Mop_m$ the dilation (or magnification) operator $\Mop_{m}(\psi)(\boldsymbol{x})=\psi\left({\boldsymbol{x}}/{m}\right)$. The Fresnel scaling formula can then be written more compactly as \begin{equation}\label{e65}
\Pop_{Z-z_j}\left(P_{z_j}\cdot\psi \right)\simeq P_Z\cdot \Mop_{m_j}\big(\Pop_{\zeta_j}(\psi)\big).\end{equation}
We remark that, in case the object is inserted at some distance $z_j$ before the focal spot, the above derivation still works. In this case, both $m_j$ and $\zeta_j$ in formula \eqref{e65} become negative. This has the curious interpretation that instead of forward propagating 
$P_{z_j}\cdot \psi$ to the other side by $\Pop_{Z-z_j}$, we may \textit{backward-propagate} $\psi$ by the negative distance $\zeta$, then rescale it by the positive number $|m_j|$, and finally flip the image so that up becomes down and left becomes right. This will be further studied in upcoming works. 
The following result, which allows us to swap the order of the $\Mop$ and $\Pop$-operators, is also convenient.

\begin{proposition}\label{p1}
Given any $m,\zeta$ we have $\Mop_{m}\big(\Pop_\zeta(\psi)\big)\simeq \Pop_{\zeta m^2}\big(\Mop_m(\psi)\big)$.
\end{proposition}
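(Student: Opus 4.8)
The plan is to unwind both sides into their defining convolution integrals and match them after a single change of variables. Writing $\Mop_m\Pop_\zeta(\psi)(\boldsymbol{x}) = (\Pop_\zeta\psi)(\boldsymbol{x}/m) = \int \psi(\boldsymbol{w}) P_\zeta(\boldsymbol{x}/m - \boldsymbol{w})\,d\boldsymbol{w}$ on one hand, and $\Pop_{\zeta m^2}(\Mop_m\psi)(\boldsymbol{x}) = \int \psi(\boldsymbol{w}/m) P_{\zeta m^2}(\boldsymbol{x}-\boldsymbol{w})\,d\boldsymbol{w}$ on the other, I would substitute $\boldsymbol{w} = m\boldsymbol{v}$ in the second integral, picking up the Jacobian $m^2$ (we work in two dimensions, $\boldsymbol{x}=(x_1,x_2)$). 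The key algebraic fact to extract is that the Fresnel kernel \eqref{defPz} obeys $m^2 P_{\zeta m^2}(\boldsymbol{x}-m\boldsymbol{v}) = e^{ik\zeta(m^2-1)}\, P_\zeta(\boldsymbol{x}/m - \boldsymbol{v})$: the quadratic exponent $|\boldsymbol{x}-m\boldsymbol{v}|^2/\lambda\zeta m^2$ collapses to $|\boldsymbol{x}/m - \boldsymbol{v}|^2/\lambda\zeta$, the prefactor $1/(i\lambda\zeta m^2)$ multiplied by the Jacobian $m^2$ becomes $1/(i\lambda\zeta)$, and only the plane-wave factor $e^{ikZ}$ in the kernel fails to match, leaving the constant discrepancy $e^{ik\zeta(m^2-1)}$ of modulus $1$.

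Inserting this identity back under the integral sign then gives $\Pop_{\zeta m^2}(\Mop_m\psi)(\boldsymbol{x}) = e^{ik\zeta(m^2-1)}\int \psi(\boldsymbol{v}) P_\zeta(\boldsymbol{x}/m - \boldsymbol{v})\,d\boldsymbol{v} = e^{ik\zeta(m^2-1)}\,\Mop_m\Pop_\zeta(\psi)(\boldsymbol{x})$, and since $|e^{ik\zeta(m^2-1)}|=1$ this is precisely the claimed relation $\Mop_m\Pop_\zeta(\psi)\simeq\Pop_{\zeta m^2}\Mop_m(\psi)$.

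There is no deep obstacle here; the only points requiring care are (i) retaining the dimension-dependent Jacobian $m^2$ — this is exactly why the modified distance scales as $\zeta m^2$ rather than $\zeta m$ — and (ii) tracking the $e^{ikZ}$ prefactor hidden in $P_Z$, which is the kind of constant unimodular factor that the symbol $\simeq$ is designed to discard, so it need not be carried through. An equivalent one-line argument is available in Fourier space: since $\widehat{P_Z}$ is the chirp $\boldsymbol{\xi}\mapsto e^{-i\pi\lambda Z|\boldsymbol{\xi}|^2}$ up to a constant and $\widehat{\Mop_m\psi}(\boldsymbol{\xi}) = m^2\hat{\psi}(m\boldsymbol{\xi})$, both sides have Fourier transform equal, up to a constant, to $m^2\hat{\psi}(m\boldsymbol{\xi})\,e^{-i\pi\lambda\zeta m^2|\boldsymbol{\xi}|^2}$; I would nonetheless present the direct convolution computation, since it makes the residual phase factor explicit.
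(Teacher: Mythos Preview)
Your proposal is correct and follows essentially the same route as the paper: both arguments write the Fresnel propagator as its defining convolution integral and apply the single change of variables $\boldsymbol{w}\leftrightarrow m\boldsymbol{v}$, picking up the Jacobian $m^2$ that converts $\zeta$ into $\zeta m^2$. The only cosmetic differences are that the paper starts from $\Mop_m\Pop_\zeta$ and transforms toward $\Pop_{\zeta m^2}\Mop_m$ (you go the other way), and the paper discards the $e^{ikZ}$ prefactor via $\simeq$ at the outset rather than tracking the residual phase $e^{ik\zeta(m^2-1)}$ explicitly as you do.
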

\begin{proof}
By the change of variables $\boldsymbol{w}=\boldsymbol{t}/m$ we get
\begin{align*}
    &\Mop_{m}(\Pop_\zeta(\psi))(\boldsymbol{x}) \simeq \frac{1}{\lambda \zeta}\int  \psi(\boldsymbol{w})\exp\left(i\pi|\boldsymbol{x}/m-\boldsymbol{w}|^2/\lambda \zeta\right)~d\boldsymbol{w}=\\
    &\frac{1}{\lambda \zeta m^2} \int \psi(\boldsymbol{t}/m)\exp\left(i\pi|\boldsymbol{x}-\boldsymbol{t}|^2/\lambda \zeta m^2\right)~{d\boldsymbol{w}}\simeq \Pop_{\zeta m^2}(\Mop_m(\psi))(\boldsymbol{x}),
\end{align*}
as desired.
\end{proof}

In particular, with $\zeta_j$ and $m_j$ as above we have $\zeta_jm_j^2=(Z-z_j)m_j$ and hence \eqref{e65} can be recast as 
\begin{equation}\label{e11}
    \Pop_{Z-z_j}\left(P_{z_j}\cdot  \psi \right)\simeq P_Z\cdot \Pop_{(Z-z_j)m_j}\left(\Mop_{m_j}(\psi)\right).
\end{equation}

\subsection{Cone-beam propagation with probe}

We now tackle the case when the probe is abberrated e.g. by KB-mirrors or other systematic artefacts of the measurement setup. The probe in the plane $z_{0}$ (i.e.~the first measurement distance, see Figure~\ref{fig:setup}), is now modelled by $$\exp(\pi i |\boldsymbol{x}|^2/\lambda z_0)\cdot q\simeq \lambda z_0 P_{z_0}\cdot q,$$ where $q$ is a function in space that describes the probe at $z_0$ if we had been in the parallel beam geometry, containing both phase and amplitude. We thus absorb the intensity into $q$, so that $\|q\|^2$ equals the total intensity of the incoming beam at $z_0$. To forward model an object $\psi$ in the plane $z_0$ to the detector we simply apply $\mathcal{P}_{Z-z_{0}}$ to $\exp(\pi i |\boldsymbol{x}|^2/\lambda z_0)\cdot q \cdot\psi$, which easily can be computed via Theorem \ref{t2}. However, the situation gets more tricky after the sample has been moved to the other distances $z_j$, $j=1, 2,3$. The field on the detector is given by  
\begin{equation}\label{sista}\Pop_{Z-z_j}\left(\Pop_{z_j-z_0}\left(\exp(\pi i |\boldsymbol{x}|^2/\lambda z_0)\cdot q\right)\cdot\psi\right).\end{equation}
For practical purposes, the above expression needs to be transformed into something more implementation-friendly. Our main theorem reads

\begin{theorem}\label{t3}
The intensity on the detector, i.e.~the square modulus of \eqref{sista}, is given by
\begin{equation*}
\frac{z_0^2}{ Z^2}\left|\Mop_{m_0}\left(\Pop_{\zeta_j/\tilde m_j^2}\left(\Pop_{(z_j-z_0)/\tilde m_j}(q)\cdot \Mop_{1/\tilde m_j}(\psi)\right) \right)\right|^2,
\end{equation*}
where $\zeta_j={z_j(Z-z_j)}/{Z}$, $\tilde m_j={z_j}/{z_0}$ and $m_0={Z}/{z_0}$.
\end{theorem}

\begin{proof}
For readability we shall work with $P_{z_0}$ instead of $\exp(\pi i |\boldsymbol{x}|^2/\lambda z_0)$ (since they are the same except for a multiplicative constant). Applying formula \eqref{e65} to $\Pop_{z_j-z_0}\left( P_{z_0}\cdot q\right)$ (with $\tilde Z=z_j$, $\tilde m_j={z_j}/{z_0}$ and $\tilde \zeta_j={(z_j-z_0)}/{\tilde m_j}$) gives $$\Pop_{z_j-z_0}\left( P_{z_0}\cdot q\right)\simeq P_{z_j}\cdot \Mop_{\tilde m_j}(\Pop_{(z_j-z_0)/\tilde m_j}(q)).$$
    This expression is then multiplied by $\psi$ whereby we apply $\Pop_{Z-z_j}$. To simplify this expression we may use \eqref{e65} once more (now with $Z,~m_j$ and $\zeta_j$ as in Theorem \ref{t2}) to get \begin{equation}\label{e66}
        \begin{aligned}
        &\Pop_{Z-z_j}\left(P_{z_j}\cdot\Mop_{\tilde m_j}(\Pop_{(z_j-z_0)/\tilde m_j}(q))\cdot \psi \right)\simeq \\&P_{Z}\cdot\Mop_{m_j}\left(\Pop_{\zeta_j}\left(\Mop_{\tilde m_j}(\Pop_{(z_j-z_0)/\tilde m_j}(q))\cdot \psi \right)\right) \simeq\\&P_{Z}\cdot\Mop_{m_j}\left(\Pop_{\zeta_j}\left(\Mop_{\tilde m_j}\left(\Pop_{(z_j-z_0)/\tilde m_j}(q)\cdot \Mop_{1/\tilde m_j}(\psi) \right)\right)\right),  
    \end{aligned}
    \end{equation}
    where we used that $\Mop_{\tilde m_j}(\varphi_1)\cdot \varphi_2=\Mop_{\tilde m_j}\big(\varphi_1\cdot \Mop_{1/\tilde m_j}(\varphi_2)\big)$. We now apply Proposition \ref{p1} to the operator  $\Mop_{m_j}\Pop_{\zeta_j}\Mop_{\tilde m_j}$ which yields $$\Mop_{m_j}\Pop_{\zeta_j}\Mop_{\tilde m_j}=\Mop_{m_j}\Mop_{\tilde m_j}\Pop_{\zeta_j/\tilde m_j^2}=\Mop_{m_0}\Pop_{\zeta_j/\tilde m_j^2}.$$ 
By inserting this in \eqref{e66} and multiplying everything with $\lambda z_0$, we obtain \begin{align*}
  &\Pop_{Z-z_j}\left(\Pop_{z_j-z_0}\left(\exp(\pi i |\boldsymbol{x}|^2/\lambda z_0)\cdot q\right)\cdot\psi\right)\simeq \\&\lambda z_0 P_{Z}\Mop_{m_0}\left(\Pop_{\zeta_j/\tilde m_j^2}\left(\Pop_{(z_j-z_0)/\tilde m_j}(q)\cdot \Mop_{1/\tilde m_j}(\psi)\right) \right).
\end{align*}
The desired formula follows by applying the modulus, noting that $|P_{Z}|={1}/{\lambda Z}$.
\end{proof}

\section*{Acknowledgments}

This research used resources of the Advanced Photon Source, a U.S. Department of Energy (DOE) Office of Science user facility and is based on work supported by Laboratory Directed Research and Development (LDRD) funding from Argonne National Laboratory, provided by the Director, Office of Science, of the U.S. DOE under Contract No. DE-AC02-06CH11357.

\section*{Disclosures}
The authors declare no conflicts of interest.

\section*{Data availability}
Data underlying the results presented in this paper are not publicly available at this time but may be obtained from the authors upon reasonable request.

\bibliography{refs}

\begin{thebibliography}{10}
\newcommand{\enquote}[1]{``#1''}

\bibitem{cloetens1999holotomography}
P.~Cloetens, W.~Ludwig, J.~Baruchel, \emph{et~al.}, \enquote{Holotomography: Quantitative phase tomography with micrometer resolution using hard synchrotron radiation x rays,} {\protect\JournalTitle{Applied physics letters}} \textbf{75}, 2912--2914 (1999).

\bibitem{langer2012x}
M.~Langer, A.~Pacureanu, H.~Suhonen, \emph{et~al.}, \enquote{{X-ray phase nanotomography resolves the 3D human bone ultrastructure},} {\protect\JournalTitle{{PLoS ONE}}}  (2012).

\bibitem{kuan2020dense}
A.~T. Kuan, \enquote{Dense neuronal reconstruction through x-ray holographic nano-tomography,} {\protect\JournalTitle{Biophysical Journal}} \textbf{118}, 290a (2020).

\bibitem{andersson2020axon}
M.~Andersson, H.~M. Kjer, J.~Rafael-Patino, \emph{et~al.}, \enquote{Axon morphology is modulated by the local environment and impacts the noninvasive investigation of its structure--function relationship,} {\protect\JournalTitle{Proceedings of the National Academy of Sciences}} \textbf{117}, 33649--33659 (2020).

\bibitem{robisch2020nanoscale}
A.-L. Robisch, M.~Eckermann, M.~T{\"o}pperwien, \emph{et~al.}, \enquote{Nanoscale x-ray holotomography of human brain tissue with phase retrieval based on multienergy recordings,} {\protect\JournalTitle{Journal of Medical Imaging}} \textbf{7}, 013501--013501 (2020).

\bibitem{nguyen20213d}
T.-T. Nguyen, J.~Villanova, Z.~Su, \emph{et~al.}, \enquote{3d quantification of microstructural properties of lini0. 5mn0. 3co0. 2o2 high-energy density electrodes by x-ray holographic nano-tomography,} {\protect\JournalTitle{Advanced Energy Materials}} \textbf{11}, 2003529 (2021).

\bibitem{li2022dynamics}
J.~Li, N.~Sharma, Z.~Jiang, \emph{et~al.}, \enquote{Dynamics of particle network in composite battery cathodes,} {\protect\JournalTitle{Science}} \textbf{376}, 517--521 (2022).

\bibitem{martinez2016id16b}
G.~Mart{\'\i}nez-Criado, J.~Villanova, R.~Tucoulou, \emph{et~al.}, \enquote{Id16b: a hard x-ray nanoprobe beamline at the esrf for nano-analysis,} {\protect\JournalTitle{Journal of synchrotron radiation}} \textbf{23}, 344--352 (2016).

\bibitem{da2017high}
J.~C. da~Silva, A.~Pacureanu, Y.~Yang, \emph{et~al.}, \enquote{High-energy cryo x-ray nano-imaging at the id16a beamline of esrf,} in \emph{X-Ray Nanoimaging: Instruments and Methods III,}  vol. 10389 (SPIE, 2017), pp. 8--14.

\bibitem{kalbfleisch2022x}
S.~Kalbfleisch, Y.~Zhang, M.~Kahnt, \emph{et~al.}, \enquote{X-ray in-line holography and holotomography at the nanomax beamline,} {\protect\JournalTitle{Journal of Synchrotron Radiation}} \textbf{29}, 224--229 (2022).

\bibitem{hagemann2017probe}
J.~Hagemann, A.-L. Robisch, M.~Osterhoff, and T.~Salditt, \enquote{Probe reconstruction for holographic x-ray imaging,} {\protect\JournalTitle{Journal of synchrotron radiation}} \textbf{24}, 498--505 (2017).

\bibitem{bean2021new}
S.~Bean, V.~De~Andrade, A.~Deriy, \emph{et~al.}, \enquote{A new ultra-stable variable projection microscope for the aps upgrade of 32-id,} in \emph{11th Mechanical Engineering Design of Synchrotron Radiation Equipment and Instrumentation (MEDSI'20), Chicago, IL, USA, 24-29 July 2021,}  (JACOW Publishing, Geneva, Switzerland, 2021), pp. 211--214.

\bibitem{gureyev1996phase}
T.~E. Gureyev and K.~A. Nugent, \enquote{Phase retrieval with the transport-of-intensity equation. ii. orthogonal series solution for nonuniform illumination,} {\protect\JournalTitle{JOSA A}} \textbf{13}, 1670--1682 (1996).

\bibitem{barty1998quantitative}
A.~Barty, K.~Nugent, D.~Paganin, and A.~Roberts, \enquote{Quantitative optical phase microscopy,} {\protect\JournalTitle{Optics Letters}} \textbf{23}, 817--819 (1998).

\bibitem{zabler2005optimization}
S.~Zabler, P.~Cloetens, J.-P. Guigay, \emph{et~al.}, \enquote{Optimization of phase contrast imaging using hard x rays,} {\protect\JournalTitle{Review of Scientific Instruments}} \textbf{76} (2005).

\bibitem{paganin2006coherent}
D.~Paganin, \emph{Coherent X-ray optics}, 6 (Oxford University Press, USA, 2006).

\bibitem{langer2008quantitative}
M.~Langer, P.~Cloetens, J.-P. Guigay, and F.~Peyrin, \enquote{Quantitative comparison of direct phase retrieval algorithms in in-line phase tomography,} {\protect\JournalTitle{Medical physics}} \textbf{35}, 4556--4566 (2008).

\bibitem{salditt2008high}
T.~Salditt, S.~Kr{\"u}ger, C.~Fuhse, and C.~B{\"a}htz, \enquote{High-transmission planar x-ray waveguides,} {\protect\JournalTitle{Physical review letters}} \textbf{100}, 184801 (2008).

\bibitem{salditt2015compound}
T.~Salditt, M.~Osterhoff, M.~Krenkel, \emph{et~al.}, \enquote{Compound focusing mirror and x-ray waveguide optics for coherent imaging and nano-diffraction,} {\protect\JournalTitle{Journal of synchrotron radiation}} \textbf{22}, 867--878 (2015).

\bibitem{kruger2012sub}
S.~Kr{\"u}ger, H.~Neubauer, M.~Bartels, \emph{et~al.}, \enquote{Sub-10 nm beam confinement by x-ray waveguides: design, fabrication and characterization of optical properties,} {\protect\JournalTitle{Journal of synchrotron radiation}} \textbf{19}, 227--236 (2012).

\bibitem{hagemann2014reconstruction}
J.~Hagemann, A.-L. Robisch, D.~Luke, \emph{et~al.}, \enquote{Reconstruction of wave front and object for inline holography from a set of detection planes,} {\protect\JournalTitle{Optics Express}} \textbf{22}, 11552--11569 (2014).

\bibitem{maiden2009improved}
A.~M. Maiden and J.~M. Rodenburg, \enquote{An improved ptychographical phase retrieval algorithm for diffractive imaging,} {\protect\JournalTitle{Ultramicroscopy}} \textbf{109}, 1256--1262 (2009).

\bibitem{luke2004relaxed}
D.~R. Luke, \enquote{Relaxed averaged alternating reflections for diffraction imaging,} {\protect\JournalTitle{Inverse problems}} \textbf{21}, 37 (2004).

\bibitem{enders2016computational}
B.~Enders and P.~Thibault, \enquote{A computational framework for ptychographic reconstructions,} {\protect\JournalTitle{Proceedings of the Royal Society A: Mathematical, Physical and Engineering Sciences}} \textbf{472}, 20160640 (2016).

\bibitem{dora2024artifact}
J.~Dora, M.~M{\"o}ddel, S.~Flenner, \emph{et~al.}, \enquote{Artifact-suppressing reconstruction of strongly interacting objects in x-ray near-field holography without a spatial support constraint,} {\protect\JournalTitle{Optics express}} \textbf{32}, 10801--10828 (2024).

\bibitem{hagemann2017x}
J.~Hagemann, \emph{X-ray near-field holography: Beyond idealized assumptions of the probe} (G{\"o}ttingen University Press, 2017).

\bibitem{homann2015validity}
C.~Homann, T.~Hohage, J.~Hagemann, \emph{et~al.}, \enquote{Validity of the empty-beam correction in near-field imaging,} {\protect\JournalTitle{Physical Review A}} \textbf{91}, 013821 (2015).

\bibitem{stockmar2013near}
M.~Stockmar, P.~Cloetens, I.~Zanette, \emph{et~al.}, \enquote{Near-field ptychography: phase retrieval for inline holography using a structured illumination,} {\protect\JournalTitle{Scientific reports}} \textbf{3}, 1927 (2013).

\bibitem{Dai:00}
Y.~Dai, J.~Han, G.~Liu, \emph{et~al.}, \enquote{Convergence properties of nonlinear conjugate gradient methods,} {\protect\JournalTitle{SIAM Journal on Optimization}} \textbf{10}, 345--358 (2000).

\bibitem{Fletcher:64}
R.~Fletcher and C.~M. Reeves, \enquote{Function minimization by conjugate gradients,} {\protect\JournalTitle{The Computer Journal}} \textbf{7}, 149--154 (1964).

\bibitem{Polak:69}
E.~Polak and G.~Ribiere, \enquote{Note sur la convergence de m{\'e}thodes de directions conjugu{\'e}es,} {\protect\JournalTitle{ESAIM: Mathematical Modelling and Numerical Analysis-Mod{\'e}lisation Math{\'e}matique et Analyse Num{\'e}rique}} \textbf{3}, 35--43 (1969).

\bibitem{Polyak:69}
B.~T. Polyak, \enquote{The conjugate gradient method in extremal problems,} {\protect\JournalTitle{USSR Computational Mathematics and Mathematical Physics}} \textbf{9}, 94--112 (1969).

\bibitem{DaiYuan:99}
Y.~H. Dai and Y.~Yuan, \enquote{A nonlinear conjugate gradient method with a strong global convergence property,} {\protect\JournalTitle{SIAM J. Optim.}} \textbf{10}, 177--182 (1999).

\bibitem{goldstein1988satellite}
R.~M. Goldstein, H.~A. Zebker, and C.~L. Werner, \enquote{Satellite radar interferometry: Two-dimensional phase unwrapping,} {\protect\JournalTitle{Radio science}} \textbf{23}, 713--720 (1988).

\bibitem{ghiglia1998two}
D.~C. Ghiglia, \enquote{Two-dimentional phase unwrapping: Theory,} {\protect\JournalTitle{Algorithms, and Software}}  (1998).

\bibitem{guizar2011phase}
M.~Guizar-Sicairos, A.~Diaz, M.~Holler, \emph{et~al.}, \enquote{Phase tomography from x-ray coherent diffractive imaging projections,} {\protect\JournalTitle{Optics express}} \textbf{19}, 21345--21357 (2011).

\bibitem{wittwer2022phase}
F.~Wittwer, J.~Hagemann, D.~Br{\"u}ckner, \emph{et~al.}, \enquote{Phase retrieval framework for direct reconstruction of the projected refractive index applied to ptychography and holography,} {\protect\JournalTitle{Optica}} \textbf{9}, 295--302 (2022).

\bibitem{aidukas2024high}
T.~Aidukas, N.~W. Phillips, A.~Diaz, \emph{et~al.}, \enquote{High-performance 4-nm-resolution x-ray tomography using burst ptychography,} {\protect\JournalTitle{Nature}} \textbf{632}, 81--88 (2024).

\bibitem{becker2018atomic}
N.~G. Becker, A.~L. Butterworth, M.~Salome, \emph{et~al.}, \enquote{Atomic layer deposition of 2d and 3d standards for synchrotron-based quantitative composition and structure analysis methods,} {\protect\JournalTitle{Journal of Vacuum Science \& Technology A}} \textbf{36} (2018).

\bibitem{Aslan:19}
S.~Aslan, V.~Nikitin, D.~J. Ching, \emph{et~al.}, \enquote{Joint ptycho-tomography reconstruction through alternating direction method of multipliers,} {\protect\JournalTitle{Optics Express}} \textbf{27}, 9128--9143 (2019).

\bibitem{nikitin2019photon}
V.~Nikitin, S.~Aslan, Y.~Yao, \emph{et~al.}, \enquote{Photon-limited ptychography of 3d objects via bayesian reconstruction,} {\protect\JournalTitle{OSA Continuum}} \textbf{2}, 2948--2968 (2019).

\bibitem{nikitin2021distributed}
V.~Nikitin, V.~De~Andrade, A.~Slyamov, \emph{et~al.}, \enquote{Distributed optimization for nonrigid nano-tomography,} {\protect\JournalTitle{IEEE Transactions on Computational Imaging}} \textbf{7}, 272--287 (2021).

\end{thebibliography}

\end{document}